\numberwithin{equation}{section}
\numberwithin{figure}{section}
\theoremstyle{plain}
\newtheorem{thm}{\protect\theoremname}
  \theoremstyle{definition}
  \newtheorem{defn}[thm]{\protect\definitionname}
  \theoremstyle{remark}
  \newtheorem{rem}[thm]{\protect\remarkname}
  \theoremstyle{plain}
  \newtheorem{prop}[thm]{\protect\propositionname}
  \providecommand{\definitionname}{Definition}
  \providecommand{\propositionname}{Proposition}
  \providecommand{\remarkname}{Remark}
\providecommand{\theoremname}{Theorem}
\begin{document}

\title[double-slit with negative probabilities]{Negative probabilities and Counterfactual Reasoning on the double-slit
Experiment }

\author{J. Acacio de Barros$^{*}$, Gary Oas$^{\dagger}$, and Patrick Suppes$^{\star}$ }

\address{$^{*}$ Liberal Studies Program, San Francisco State University,
San Francisco, CA, 94132.\\
$^{\dagger}$Stanford Pre-Collegiate Studies, Stanford University,Stanford,
CA, 94305-4101.\\
$^{\star}$ CSLI, Ventura Hall, Stanford University, Stanford, CA,
94305-4115. }
\begin{abstract}
In this paper we attempt to establish a theory of negative (quasi)
probability distributions from fundamental principles and apply it
to the study of the double-slit experiment in quantum mechanics. We
do so in a way that preserves the main conceptual issues intact but
allow for a clearer analysis, by representing the double-slit experiment
in terms of the Mach-Zehnder interferometer, and show that the main
features of quantum systems relevant to the double-slit are present
also in the Mach-Zehnder. This converts the problem from a continuous
to a discrete random variable representation. We then show that, for
the Mach-Zehnder interferometer, negative probabilities do not exist
that are consistent with interference and which-path information,
contrary to what Feynman believed. However, consistent with Scully
et al.'s experiment, if we reduce the amount of experimental information
about the system and rely on counterfactual reasoning, a joint negative
probability distribution can be constructed for the Mach-Zehnder experiment. 
\end{abstract}

\date{\today. Early draft. Do not quote without authors' permission. }

\maketitle
Two of the authors (JdB, GO) would like to express their gratitude
for having the honor to contribute to this volume recognizing Patrick
Suppes. It is with great sadness that Pat\textquoteright s passing
came so soon. He did not have a chance to contribute to this final
version, but the core ideas put forth here stem from him, and we believe
that he would be pleased with the final result. However, we emphasize
that any errors are the exclusive responsibility of the first two
authors, and would not be present if the paper had gone through Pat\textquoteright s
usual rigorous review. We also want to take this opportunity to express
our indebtedness to Pat for his guidance and patience over the past
few decades. Pat introduced both of us to the importance of joint
probability distributions in quantum mechanics, and was to us not
only a collaborator and mentor, but also a friend, and we heartily
dedicate this paper to him.

\section{Introduction}

Ever since its inception, quantum mechanics has not ceased to perplex
physicists with its counter-intuitive descriptions of nature. For
instance, in their famous paper Albert Einstein, Boris Podolsky, and
Nathan Rosen (EPR) argued the incompleteness of the quantum mechanical
description \citep{einstein_can_1935}. At the core of their argument
was the superposition of two wavefunctions where properties of two
particles far apart, A and B, were highly correlated. Since both particles
are spatially separated, EPR argued that a measurement on A should
not affect B. Therefore, we could use the correlation and a measurement
on A to infer the value of a property in B \emph{without disturbing
}it. Thus, concluded EPR, the quantum mechanical description of nature
had to be incomplete, as it did not allow the values of a property
to be fixed before an experiment was performed.

In 1964, John Bell showed that not only quantum mechanics was incomplete,
but also that a complete description of physical reality such as the
one espoused by EPR was incompatible with quantum mechanical predictions
\citep{bell_einstein-podolsky-rosen_1964}. Later on, Alain Aspect,
Jean Dalibard, and Gérard Roger, in an impressive and technically
challenging experiment, obtained correlation measurements between
measurement events separated by a spacelike interval. Their correlations
supported quantum mechanics, in disagreement with EPR's metaphysical
views \citep{aspect_experimental_1981,aspect_experimental_1982}.
Other puzzling results followed, like the Kochen-Specker theorem \citep{kochen_problem_1967,kochen_problem_1975},
Wheeler's delayed choice experiment \citep{wheeler_past_1978,jacques_experimental_2007},
the quantum eraser \citep{scully_quantum_1982}, and the Greenberger-Horne-Zeilinger
paradox \citep{greenberger_going_1989,de_barros_inequalities_2000,de_barros_probabilistic_2001},
to mention a few. 

What most of the above examples have in common (with the exception
of Kochen-Specker) is that they all use superpositions of quantum
states. There is nothing more puzzling in quantum mechanics than the
fact that a given system can be in a state with two incompatible properties
simultaneously ``present.'' For example, let $\hat{O}$ be an observable
corresponding to a property $\mathbf{O}$, and $|o_{1}\rangle$ and
$|o_{2}\rangle$ two eigenstates of $\hat{O}$ with eigenvalues $o_{1}$
and $o_{2}$. If the quantum system is in the state $|o_{1}\rangle$,
we may say that it has an objective property $\mathbf{O}$ and its
value is $o_{1}$, in the sense that if we measure this system for
property $\mathbf{O}$, the outcome of this measurement will be $o_{1}$
with probability one. Similarly if the system is in the state $|o_{2}\rangle$.
However, it is also in principle possible to prepare a system in a
quantum state that is the superposition of $|o_{1}\rangle$ and $|o_{2}\rangle$,
i.e. $c_{1}|o_{1}\rangle+c_{2}|o_{2}\rangle$, where $c_{1}$ and
$c_{2}$ are any complex numbers satisfying the constraint $\left|c_{1}\right|^{2}+\left|c_{2}\right|^{2}=1$.
At a first glance, this may not seem puzzling, as it is just telling
us that the system is perhaps in a state where the value of $\mathbf{O}$
is unknown, except that it can either be $o_{1}$ ($o_{2}$) with
probability $\left|c_{1}\right|^{2}$ ($\left|c_{2}\right|^{2}$).
The perplexing aspects of superpositions come from the study of properties
such as $\mathbf{O}$ taken in conjunction with other properties,
say $\mathbf{O}'$, in cases where their corresponding observables,
$\hat{O}$ and $\hat{O}'$, do not commute ($[\hat{O},\hat{O}']\neq0$). 

There is perhaps no simpler context in which the superposition mystery
reveals itself than single photon interference, realized in the double-slit
experiment. In fact, in his Lectures in Physics, Richard Feynman famously
claimed that this experiment contains the \emph{only} mystery of quantum
mechanics \citep{feynman_feynman_2011}. Although there are other
mysteries, as \citet{silverman_more_1995} pointed out, the double-slit
experiment provides us with an understanding of some key aspects of
quantum mechanics. It may be the true mystery of quantum mechanics
lies in the idea that a property is not the same in different contexts
\citep{dzhafarov_all-possible-couplings_2013,markiewicz_unified_2013,howard_contextuality_2014,de_barros_unifying_2014},
and that such contexts (perhaps freely) chosen by an observer can
be spacelike separated \citep{bell_einstein-podolsky-rosen_1964,bell_problem_1966,greenberger_going_1989,de_barros_inequalities_2000,de_barros_probabilistic_2001,dzhafarov_probabilistic_2014}. 

One of the main ``disturbing'' mysteries of the double-slit system,
as described by Feynman, is the non-monotonic character of the probabilities
of detection. This was exactly what motivated Feynman to use negative
probabilities to describe quantum systems \citep{feynman_negative_1987}.
However, as Feynman remarked, such an approach did not seem to provide
any new insights into quantum mechanics. 

It is our goal here to show that we can indeed gain some insight by
using negative probabilities. This paper is organized the following
way. First, we introduce in Section \ref{sec:The-Mystery} a simplified
version of the double-slit experiment in the form of the Mach-Zehnder
interferometer. Then in Section \ref{sec:Negative-Probabilities},
we present a theory of negative probabilities. We then show, in Section
\ref{sec:The-Mach-Zehnder-Interferometer},  that proper probability
measures do not exist for the simultaneous measurements of the particle-
and wave-like properties, but negative probabilities do under certain
counterfactual conditions that are often assumed in experimental analyses.

\section{The Mystery of the double-slit Experiment\label{sec:The-Mystery}}

In this section we reproduce Feynman's discussion of the double-slit
experiment, and why he considered it mysterious \citep{feynman_feynman_2011}.
Feynman's argument involves the idea that classically we think of
systems in terms of two distinct and incompatible concepts, particles
or waves%
\footnote{You could also have fields, but in the context of the double-slit
experiment, as it will become clear later, the relevant property of
a field would be its spatial oscillations as a wave.%
}. Such concepts are incompatible because particles are localized and
waves are not. To see this, let us start with a point particle. In
classical mechanics, the main characteristic of particles is that
they are objects localized in space, and therefore can only interact
with other systems that are present in their localized position. For
example, let us consider a particle $P$ whose position at time $t$
is described by the position vector $\mathbf{r}_{P}\left(t\right)$.
At time $t_{0}$, $P$ can only interact with another physical entity
that is also at $\mathbf{r}_{P}\left(t_{0}\right)$, either another
particle $S$ such that $\mathbf{r}_{S}\left(t_{0}\right)=\mathbf{r}_{P}\left(t_{0}\right)$
or a field that is nonzero at $\mathbf{r}_{P}\left(t_{0}\right)$.
For instance, when a particle is subject to no external fields (of
course an idealization), such as gravity, it travels in a straight
line at constant speed, since no interaction is present. If this particle
then collides with another particle, say a constituent of a wall placed
in the way of the original particle, an interaction will appear%
\footnote{There is, of course, the obvious issue of how could this interaction
be relevant, given that it would occur with probability zero. %
}. But, as soon as the particle looses contact with the wall, the interaction
ceases to exist. In other words, particles interact locally. 

The second basic concept is that of waves. Historically, the physics
describing a point particle was extended to include the description
of continuous media, and, more important to our current discussion,
the vibrations of such media in the form of waves. Waves, therefore,
were considered vibrations of a medium made out of several point particles,
and the local interactions between two neighboring particles would
allow for a perturbation in one point of the medium to be propagated
to another point of the medium. Without going into the discussion
of the particulars of electromagnetic waves, the main point is that
because a single particle has an infinite number of neighbors, a disturbance
on its position propagates to \emph{all} of its neighbors and to \emph{all
}directions. Thus, the effect of such perturbation on particle $S'$
belonging to this medium due to a perturbation on particle $P'$ does
not depend on the direct contact of $S'$ and $P'$. More importantly,
such effect depends not only on $P'$, but also possibly on all other
particles that make up the medium, and also on all interactions or
boundary conditions that such particles need to satisfy. In other
words, waves interact non-locally. 

Going back to the double-slit experiment, let us analyze it from the
point of view of what one should expect to happen were it being modeled
with particles or with waves. Figure \ref{fig:Schematics-of-two-slit}
\begin{figure}
\begin{centering}
\includegraphics[scale=0.7]{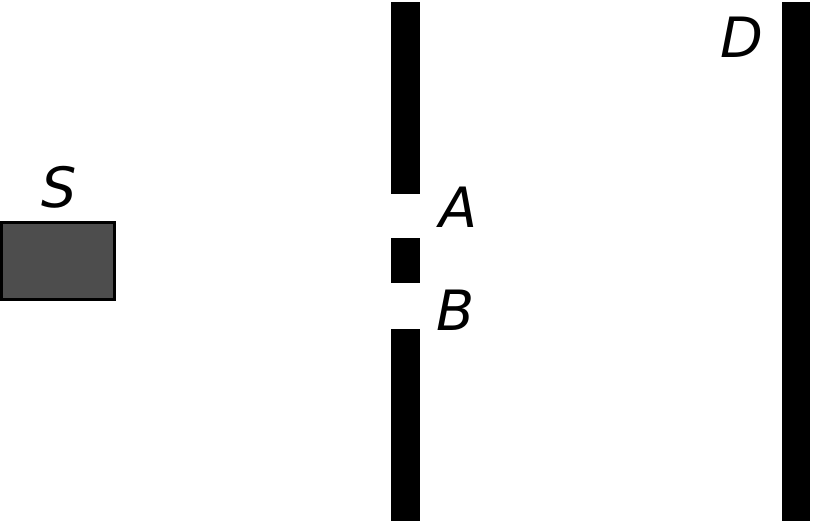}
\par\end{centering}

\protect\caption{\label{fig:Schematics-of-two-slit}Schematics of the double-slit experiment.
A source $S$ emits a physical object towards a barrier, where two
slits, $A$ and $B$, are cut to allow for its passage. Then, at a
screen $D$, the object is detected.}
\end{figure}
 shows a typical double-slit setup. We start with particles. Let us
assume that $S$ sends particles in random directions. A particle
leaving $S$ would interact only locally with the barrier and nothing
else. This means that in between $S$ and the barrier, this particle
travels along a straight line. Once it reaches the barrier, it either
goes through one of the slits and reaches $D$, or is reflected back
(or absorbed, depending on the barrier's properties). While going
through a slit, the particle may interact with the walls, perhaps
bouncing off of it, and therefore causing some scattering from the
direct path between $S$ and $D$. Thus, if we run this experiment
many times, we should expect the observed probability distribution
of particles on $D$ to be somewhat like as depicted in Figure \ref{fig:Probability-particle}.
\begin{figure}
\begin{centering}
\includegraphics[scale=0.2]{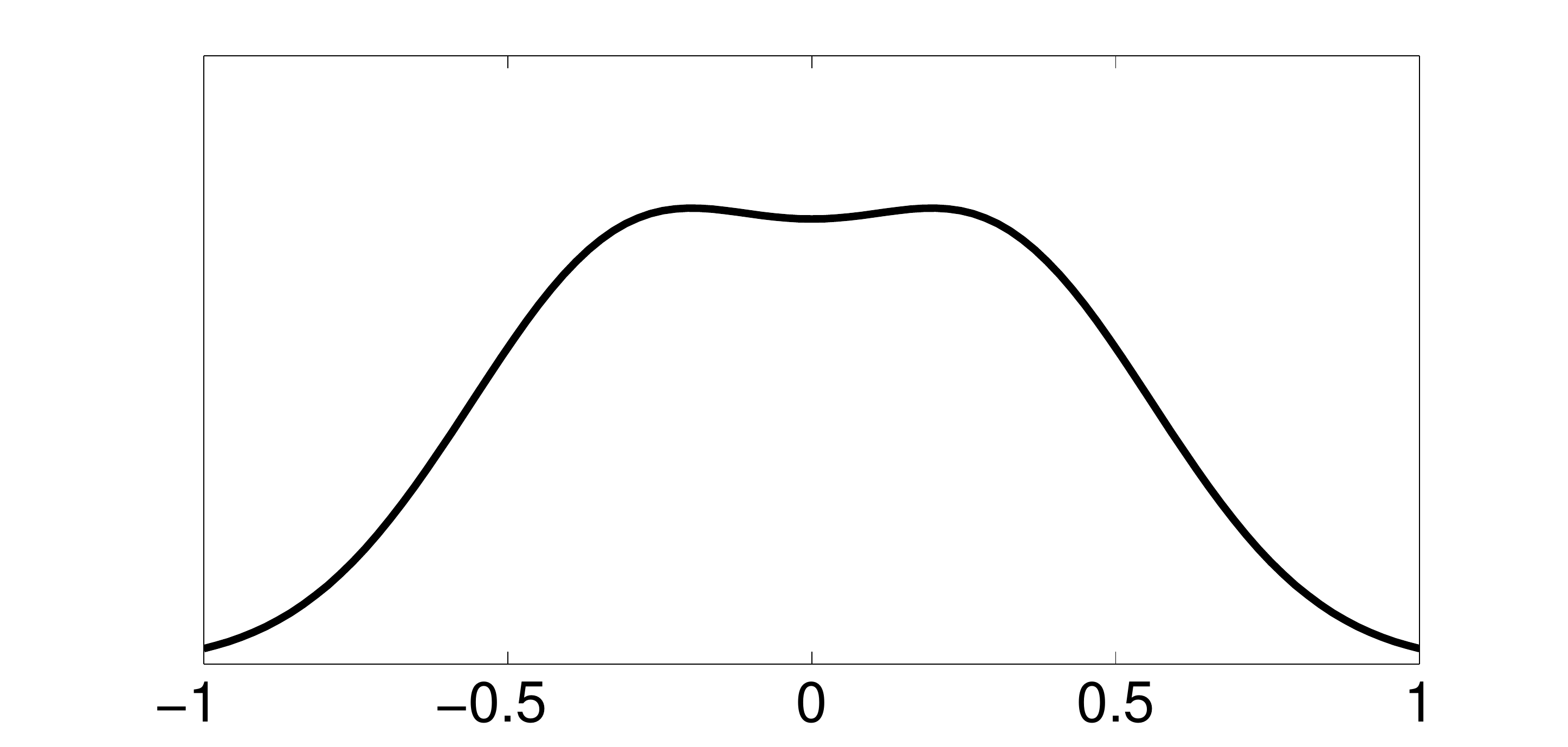}
\par\end{centering}

\protect\caption{\label{fig:Probability-particle}Probability density of observation
for the double-slit experiment, assuming a particle model. }
\end{figure}
 The resulting probability is simply the (normalized) sum of the probability
of a particle going through slit $A$ and $B$. 

A wave analysis of the experiment shows something quite different.
First, a wave is the result of a perturbation of a medium. In this
case, the source $S$ disturbs the medium, and such disturbance is
propagated in all directions. One characteristic of such propagation
is that its speed is dependent on the medium, and for the double slit
experiment this is reflected in the arrival of a wave crest (or valley)
in $A$ at the same time that a crest (or valley) arrives in $B$.
If $A$ and $B$ are small compared to the wavelengths, we can think
of them as secondary wave sources oscillating in phase. Thus, when
they arrive at $D$, in some places they will be in phase, whereas
in other places they will be out of phase. The result is the constructive
and destructive interference pattern that we know for waves, shown
in Figure \ref{fig:Probability-wave}. 
\begin{figure}
\begin{centering}
\includegraphics[scale=0.2]{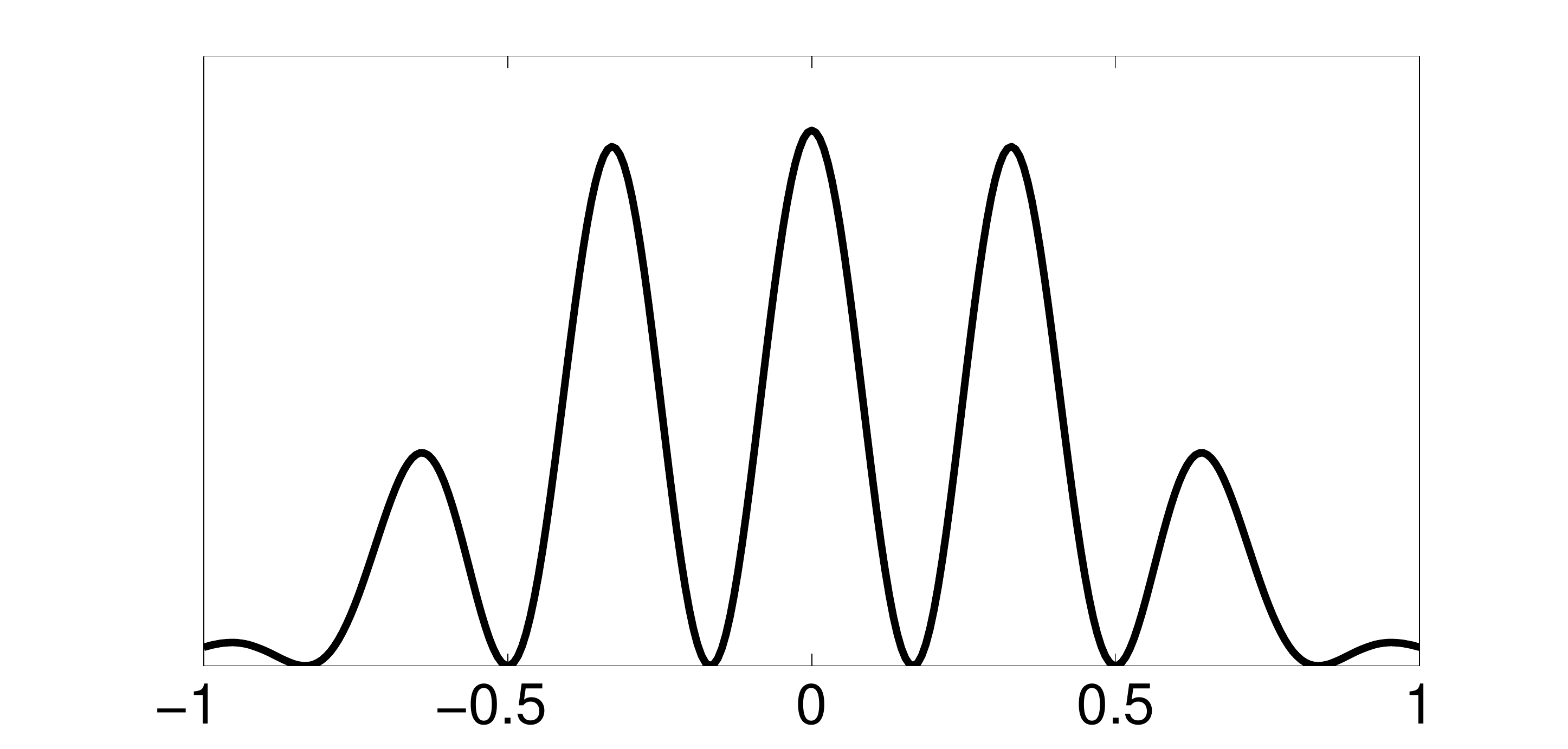}
\par\end{centering}

\protect\caption{\label{fig:Probability-wave}Intensity (arbitrary units) of the wave
at $D$ for the double-slit experiment. }
\end{figure}

Now for the puzzling aspect of it all: quantum systems, e.g. electrons,
have particle-like characteristics while at the same time being prone
to wave-like interference. Here is how those aspects manifest themselves.
Experimentally, electrons are particles. We know this because they
are localized, in the sense that when we measure an electron, it shows
up as a point in a fluorescent screen or a localized detector. This
is to be contrasted with waves, which are spread out, and therefore
have measurable components (i.e., momentum and energy) in more than
one place. Thus, as particles, when an electron leaves the source
$S$, it will either go to slit $A$ or $B$. Since its interactions
are solely local, if it goes through, say, $A$, it can only interact
with $A$, and not with $B$. Therefore, to a particle, it makes no
difference at all if slit $B$ is open or not when it goes through
$A$, and \emph{vice versa}. This is why we should expect a distribution
like that on Figure \ref{fig:Probability-particle}. The disturbing
fact is that an electron, if both slits are open, satisfies, after
many runs, the distribution given in Figure \ref{fig:Probability-wave}.
So, here is the puzzle. For a wave, the intensity is zero at several
points, e.g. at position $0.5.$ How can this zero intensity be understood?
How can the particle ``know'' (in the sense of interacting) about
$B$ if only interacting locally with $A$? To make this point even
clearer, we will examine this question in detail in Section \ref{sec:The-Mach-Zehnder-Interferometer},
using a simplified case of the double-slit experiment, the Mach-Zehnder
interferometer, together with a framework of extended probabilities.
But first, let us examine the concept of negative probabilities.

\section{Negative Probabilities\label{sec:Negative-Probabilities}}

In this Section we lay out the main relevant results and definitions
for negative probabilities. We start by first defining it in a way
that is related to Kolmogorov's \citeyearpar{kolmogorov_foundations_1950}
probability measure. We then prove some simple but relevant results. 
\begin{defn}
\label{def:Kolmogorov}Let $\Omega$ be a finite set, $\mathcal{F}$
an algebra over $\Omega$, and $p$ a real-valued functions, $p:\mathcal{F}\rightarrow\mathbb{R}$.
Then $\left(\Omega,\mathcal{F},p\right)$ is a probability space,
and $p$ a probability measure, if and only if:
\begin{eqnarray*}
\mbox{K1.} &  & 0\leq p\left(\left\{ \omega_{i}\right\} \right)\leq1,~~\forall\omega_{i}\in\Omega\\
\mbox{K2.} &  & p\left(\Omega\right)=1,\\
\mbox{K3.} &  & p\left(\left\{ \omega_{i},\omega_{j}\right\} \right)=p\left(\left\{ \omega_{i}\right\} \right)+p\left(\left\{ \omega_{j}\right\} \right),~~i\neq j.
\end{eqnarray*}
The elements $\omega_{i}$ of $\Omega$ are called \emph{elementary
probability events} or simply \emph{elementary events}. 
\end{defn}
Definition \ref{def:Kolmogorov} is the finite version of Kolmogorov's
standard definition of probability measure. In the usual definition,
$\Omega$ can be an infinite set, and then $\mathcal{F}$ needs to
be a $\sigma$-algebra. However, for the purpose of this article,
we will restrict our discussions to finite sets $\Omega$. For that
reason, we will also refer to $p$ as a \emph{proper probability distribution
or joint probability distribution}. 

It is a well know fact that for some systems it is not possible to
define a proper probability distribution. This is, in fact, the heart
of Bell's inequalities showing that for an EPR-Bohm type experiment
no local hidden-variable theories exist that are compatible with quantum
mechanics \citep{bell_einstein-podolsky-rosen_1964,bell_problem_1966}.
In fact, a hidden variable exists if and only if a joint probability
distribution exists \citep{suppes_when_1981,fine_hidden_1982}, and
Bell's inequalities are a necessary and sufficient condition for the
existence of a joint probability distribution \citep{fine_hidden_1982,suppes_collection_1996}.

To overcome this difficulty, the use of upper probabilities, where
K3 is modified to include subadditivity, has been proposed \citep{suppes_existence_1991,de_barros_probabilistic_2001,hartmann_entanglement_2010}.
The main reason for such a proposal is that quantum mechanics, as
suggested by Feynman's remarks, is nonmonotonic, and upper probabilities
offer a framework where nonmonotonicity can be described mathematically.
Thus, before we define negative probabilities, it is useful to start
with the more well-known theory of upper probabilities. 
\begin{defn}
\label{def:upper-probabilities}Let $\Omega$ be a finite set, $\mathcal{F}$
an algebra over $\Omega$, and $p^{*}$ a real-valued functions, $p^{*}:\mathcal{F}\rightarrow\mathbb{R}$.
Then $\left(\Omega,\mathcal{F},p^{*}\right)$ is an upper-probability
space, and $p^{*}$ an upper-probability measure, if and only if:
\begin{eqnarray*}
\mbox{U1.} &  & 0\leq p^{*}\left(\left\{ \omega_{i}\right\} \right)\leq1,~~\forall\omega_{i}\in\Omega\\
\mbox{U2.} &  & p^{*}\left(\Omega\right)=1,\\
\mbox{U3.} &  & p^{*}\left(\left\{ \omega_{i},\omega_{j}\right\} \right)\leq p^{*}\left(\left\{ \omega_{i}\right\} \right)+p^{*}\left(\left\{ \omega_{j}\right\} \right),~~i\neq j.
\end{eqnarray*}
\end{defn}
\begin{rem}
The main difference between upper and proper probabilities is the
substitution K3 for U3. 
\end{rem}

\begin{rem}
In many systems of interest, where the probabilities of elementary
events are computed from a set of given marginal probabilities, the
inequalities from U3 imply an underdetermination for the possible
values of the joint upper probability distribution for all $\omega_{i}\in\Omega$. 
\end{rem}

\begin{rem}
If follows from K2 and K3 that 
\[
\sum_{\omega_{i}\in\Omega}p\left(\left\{ \omega_{i}\right\} \right)=1,
\]
but because of U3 is 
\[
\sum_{\omega_{i}\in\Omega}p^{*}\left(\left\{ \omega_{i}\right\} \right)\geq1.
\]
 
\end{rem}
One of the main difficulties with upper probabilities is that, because
it uses subadditivity, it is very hard in practice to compute it.
Subaditivity also implies that a large number of different upper measures
exist, even when all moments are given. An usual approach is to request
$p^{*}$ to be as close as possible to a proper measure by minimizing
the value of $\sum_{\omega_{i}\in\Omega}p^{*}\left(\left\{ \omega_{i}\right\} \right)$,
which can be greater than one when no proper joint exists \citep{suppes_when_1981,suppes_existence_1991,fine_upper_1994,de_barros_probabilistic_2001,hartmann_entanglement_2010}. 

Another possible approach was proposed by \citet{feynman_negative_1987}
in connection to the two slit experiment: negative probabilities.
Though Feynman could not find any use for negative probability, recent
research has shown that there may be some advantage for using them
\citep{abramsky_sheaf-theoretic_2011,al-safi_simulating_2013,zhu_negative_2013,de_barros_decision_2014,oas_exploring_2014,abramsky_operational_2014,de_barros_negative_2014,de_barros_unifying_2014}.
To define negative probabilities, we first need to set forth a description
of certain systems where no proper probability distribution exists.
This is the goal of the following definition. 
\begin{defn}
Let $\Omega$ be a finite set, $\mathcal{F}$ an algebra over $\Omega$,
and let $\left(\Omega_{i},\mathcal{F}_{i},p_{i}\right)$, $i=1,\ldots,n$,
a set of $n$ probability spaces, $\mathcal{F}_{i}\subseteq\mathcal{F}$
and $\Omega_{i}\subseteq\Omega$. Then $\left(\Omega,\mathcal{F},p\right)$,where
$p$ is a real-valued function, $p:\mathcal{F}\rightarrow\left[0,1\right]$,
$p\left(\Omega\right)=1$, is \emph{compatible} with the probabilities
$p_{i}$'s if and only if 
\[
\forall\left(x\in\mathcal{F}_{i}\right)\left(p_{i}\left(x\right)=p\left(x\right)\right).
\]
Furthermore, the marginals $p_{i}$ are \emph{viable} if and only
if $p$ is a probability measure. \end{defn}
\begin{rem}
Intuitively, we can think of the $p_{i}$'s as observable marginals.
The definition above says that such marginals are \emph{viable} if
it is possible to sew them together to produce a larger probability
function over the whole space $\Omega$ (in the same spirit of \citet{dzhafarov_all-possible-couplings_2013,dzhafarov_contextuality_2014,de_barros_unifying_2014}).\emph{
}Our definition is an extension of \citet{halliwell_negative_2013},
as we consider not only the case where viable distributions exist,
but also when they do not. 
\end{rem}
For some experimental situations, such as the EPR-Bell setup, the
marginals are not viable, but are compatible with a $p$ that has
the characteristic of being negative for some elements of $\Omega$
(but not negative for the observable marginals). This motivates the
following definition of a $p$ that may take negative values. 
\begin{defn}
\label{def:negative-probabilities}Let $\Omega$ be a finite set,
$\mathcal{F}$ an algebra over $\Omega$, $P$ and $P'$ real-valued
functions, $P:\mathcal{F}\rightarrow\mathbb{R}$, $P':\mathcal{F}\rightarrow\mathbb{R}$,
and let $\left(\Omega_{i},\mathcal{F}_{i},p_{i}\right)$, $i=1,\ldots,n$,
a set of $n$ probability spaces, $\mathcal{F}_{i}\subset\mathcal{F}$
and $\Omega_{i}\subseteq\Omega$. Then $\left(\Omega,\mathcal{F},P\right)$
is a negative probability space, and $P$ a negative probability,
if and only if $\left(\Omega,\mathcal{F},P\right)$ is compatible
with the probabilities $p_{i}$'s and
\begin{eqnarray*}
\mbox{N1.} &  & \forall\left(P'\right)\left(\sum_{\omega_{i}\in\Omega}\left|P\left(\left\{ \omega_{i}\right\} \right)\right|\leq\sum_{\omega_{i}\in\Omega}\left|P'\left(\left\{ \omega_{i}\right\} \right)\right|\right)\\
\mbox{N2.} &  & \sum_{\omega_{i}\in\Omega}P\left(\left\{ \omega_{i}\right\} \right)=1\\
\mbox{N3.} &  & P\left(\left\{ \omega_{i},\omega_{j}\right\} \right)=P\left(\left\{ \omega_{i}\right\} \right)+P\left(\left\{ \omega_{j}\right\} \right),~~i\neq j.
\end{eqnarray*}

\end{defn}
In the above definition, we replace axiom K1 of nonnegativity with
a minimization of the L1 norm of the function $P$. Intuitively, as
with uppers, we seek a quasi-probability distribution that is as close
to a proper distribution as possible. Furthermore, the departure from
such proper distributions, which would have no negative numbers, motivates
the following definition of $M^{*}$ as a measure of this departure.
Throughout this paper we use $p$ for proper probability measures
(Definition \ref{def:Kolmogorov}), $p^{*}$ for upper and lower probabilities
(Definition \ref{def:upper-probabilities}), and $P$ for negative
probabilities (Definition \ref{def:negative-probabilities}). 
\begin{defn}
Let $\left(\Omega,\mathcal{F},P\right)$ be a negative probability
space. Then, the \emph{minimum L1 probability norm}, denoted\emph{
$M^{*}$, }or simply \emph{minimum probability norm}, is given by
$M^{*}=\sum_{\omega_{i}\in\Omega}\left|P\left(\left\{ \omega_{i}\right\} \right)\right|$.\end{defn}
\begin{prop}
\label{prop:NegativeLeadsToKolmog}Let $\left(\Omega,\mathcal{F},P\right)$
be a negative probability space and $\left(\Omega,\mathcal{F},p\right)$
a (Kolmogorov) probability space. Then $p=P$ iff $M^{*}=\sum p_{i}$. \end{prop}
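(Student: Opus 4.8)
The plan is to prove the biconditional by establishing the two implications separately, with the equality case of the triangle inequality carrying the real content. The only ingredients needed will be the Kolmogorov axiom K1 (nonnegativity), together with axioms N2 (normalization) and N3 (additivity) of Definition \ref{def:negative-probabilities}; notably, I do not expect to use the minimality axiom N1 at all, which is worth pointing out in the final write-up. Throughout, recall that since $\Omega$ is finite, $M^{*}=\sum_{\omega_{i}\in\Omega}\left|P\left(\left\{ \omega_{i}\right\} \right)\right|$ is a finite sum, and that by K2 and K3 the right-hand side of the claimed identity satisfies $\sum_{\omega_{i}\in\Omega}p\left(\left\{ \omega_{i}\right\} \right)=1$, so the condition ``$M^{*}=\sum p_{i}$'' is just the condition $M^{*}=1$.

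For the forward direction I would assume $p=P$. Then $P$ inherits K1, so $P\left(\left\{ \omega_{i}\right\} \right)=\left|P\left(\left\{ \omega_{i}\right\} \right)\right|$ for every $\omega_{i}$, and hence
\[
M^{*}=\sum_{\omega_{i}\in\Omega}\left|P\left(\left\{ \omega_{i}\right\} \right)\right|=\sum_{\omega_{i}\in\Omega}P\left(\left\{ \omega_{i}\right\} \right)=\sum_{\omega_{i}\in\Omega}p\left(\left\{ \omega_{i}\right\} \right)=1,
\]
using N2 (or K2 and K3) in the last steps. For the converse I would assume $M^{*}=\sum_{\omega_{i}\in\Omega}p\left(\left\{ \omega_{i}\right\} \right)=1$. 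By N2, $\sum_{\omega_{i}\in\Omega}P\left(\left\{ \omega_{i}\right\} \right)=1$ as well, so subtracting gives $\sum_{\omega_{i}\in\Omega}\bigl(\left|P\left(\left\{ \omega_{i}\right\} \right)\right|-P\left(\left\{ \omega_{i}\right\} \right)\bigr)=0$. Every summand is nonnegative, so each must vanish, which forces $P\left(\left\{ \omega_{i}\right\} \right)=\left|P\left(\left\{ \omega_{i}\right\} \right)\right|\geq0$ for all $\omega_{i}$. Since these nonnegative numbers sum to $1$, each is also $\leq1$, so $P$ satisfies K1; it satisfies K2 by N2 and K3 by N3. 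Thus $\left(\Omega,\mathcal{F},P\right)$ is a (Kolmogorov) probability space, and identifying it with the proper distribution named in the statement yields $p=P$.

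The main obstacle, such as it is, lies entirely in the final sentence of the converse: one must be careful about \emph{which} Kolmogorov probability $p$ the statement refers to. The clean reading is that the proposition asserts $P$ itself \emph{is} a proper distribution precisely when $M^{*}=1$, i.e. $p$ is simply $P$ viewed as a Kolmogorov measure once nonnegativity is known; under that reading the identification is immediate from the argument above. If instead $p$ is meant to be an independently given Kolmogorov measure compatible with the same marginals $p_{i}$, one should add a remark that the conclusion $p=P$ then requires the marginals to pin down the joint, or else restate the claim as ``$P$ coincides with \emph{some} proper probability distribution iff $M^{*}=1$.'' I would flag this point explicitly rather than hide it, since the rest of the proof is routine.
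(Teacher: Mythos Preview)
Your proof is correct and follows essentially the same approach as the paper's: both directions hinge on the observation that $M^{*}=\sum_{\omega_i}|P(\{\omega_i\})|=1=\sum_{\omega_i}P(\{\omega_i\})$ forces every $P(\{\omega_i\})\ge 0$ (K1), after which N2 and N3 supply K2 and K3; your version simply spells out the equality-case argument that the paper leaves implicit. Your closing caveat about the intended meaning of $p$ is well taken and matches the reading adopted in Remark~\ref{Rem:NegAxiomImplyKolmo}, and your note that N1 is never invoked is a worthwhile observation.
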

\begin{proof}
Let us start with $M^{*}=\sum p_{i}$. It follows from it that all
elementary events satisfy the condition $0\le P\left(\left\{ \omega_{i}\right\} \right)\le1$,
which is K1. Together with N2 and N3, then $P$ is also a probability
measure, and $\left(\Omega,\mathcal{F},P\right)$ a probability space.
Now, if $\left(\Omega,\mathcal{F},p\right)$ is a probability space,
it follows from K1 that $M^{*}=\sum_{\omega_{i}\in\Omega}\left|P\left(\left\{ \omega_{i}\right\} \right)\right|=\sum p_{i}$. \end{proof}
\begin{rem}
\label{Rem:NegAxiomImplyKolmo}Proposition \ref{prop:NegativeLeadsToKolmog}
tells us that axioms N1-N3 include, as a special case, K1-K3. In other
words, in the special case when a proper Kolmogorovian distribution
exists ($M^{*}=\sum p_{i}$), $P$ coincides with $p$. 
\end{rem}
We end this section with one last definition that is relevant to physical
systems.
\begin{defn}
Let $\Omega$ be a finite set, $\mathcal{F}$ an algebra over $\Omega$,
and let $\left(\Omega_{i},\mathcal{F}_{i},p_{i}\right)$, $i=1,\ldots,n$,
a collection of $n$ probability spaces, $\mathcal{F}_{i}\subseteq\mathcal{F}$
and $\Omega_{i}\subseteq\Omega$. Then the probabilities $p_{i}$
are \emph{contextually biased}%
\footnote{Here we adopt and adapt the terminology of \citet{dzhafarov_contextuality_2014}.%
} if there exists an $a$ in $\mathcal{F}_{i}$ and in $\mathcal{F}_{j}$,
$i\neq j$, $b\neq a\neq b'$, $\sum_{\forall b\in\mathcal{F}_{j}}p\left(a\cap b\right)\neq\sum_{\forall b'\in\mathcal{F}_{i}}p\left(a\cap b'\right)$. \end{defn}
\begin{rem}
In physics, for multipartite systems, this definition is equivalent
to the no-signaling condition. 

\end{rem}
\begin{prop}
Let $\Omega$ be a finite set, $\mathcal{F}$ an algebra over $\Omega$,
and let $\left(\Omega_{i},\mathcal{F}_{i},P_{i}\right)$, $i=1,\ldots,n$,
a set of $n$ probability spaces, $\mathcal{F}_{i}\subseteq\mathcal{F}$
and $\Omega_{i}\subseteq\Omega$. The probabilities $P_{i}$ are not
\emph{contextuality biased} if and only if there exists a negative
probability $\left(\Omega,\mathcal{F},P\right)$ compatible with the
$p_{i}$'s. \end{prop}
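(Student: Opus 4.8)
The plan is to treat ``there is a compatible negative probability'' as a linear-feasibility question, and to read the two directions off from (a) the fact that any finitely additive real set function on $\mathcal{F}$ that reproduces the marginals already forces them to agree on shared events, and (b) a feasibility-plus-$L^1$-minimization argument.

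For the easy direction, I would assume a negative probability $(\Omega,\mathcal{F},P)$ compatible with the $p_i$'s and take $a\in\mathcal{F}_i\cap\mathcal{F}_j$, $i\neq j$. Since $a\in\mathcal{F}_i$ compatibility gives $p_i(a)=P(a)$, and since $a\in\mathcal{F}_j$ it gives $p_j(a)=P(a)$, so $p_i(a)=p_j(a)$. Writing $a$ as the disjoint union of the sets $a\cap b$ over atoms $b$ of $\mathcal{F}_j$ and using N3, $\sum_b p_j(a\cap b)=\sum_b P(a\cap b)=P(a)$, and symmetrically $\sum_{b'}p_i(a\cap b')=P(a)$ with $b'$ ranging over atoms of $\mathcal{F}_i$; hence the two sums coincide and the $p_i$ are not contextually biased. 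Note this uses only that $P$ is a real finitely additive measure on $\mathcal{F}$ matching the marginals, not that it is the $L^1$-minimal one.

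For the substantive direction I would assume the $p_i$ are not contextually biased and build $P$. Identify a prospective $P$ with the vector $x=(P(\{\omega\}))_{\omega\in\Omega}\in\mathbb{R}^{\Omega}$; by N3 this extends uniquely to a finitely additive $P$ on $\mathcal{F}$, and compatibility together with N2 then becomes the linear system $\sum_{\omega\in a}x_\omega=p_i(a)$ for every $i$ and every atom $a$ of $\mathcal{F}_i$, together with $\sum_{\omega\in\Omega}x_\omega=1$. I would first show this system is solvable over $\mathbb{R}$: by the Fredholm alternative it fails only if some combination of the equations has identically zero left-hand side but nonzero right-hand side, i.e. scalars $\lambda_{i,a},\mu$ with $\sum_i\lambda_{i,a_i(\omega)}+\mu=0$ for all $\omega$ (where $a_i(\omega)$ is the atom of $\mathcal{F}_i$ containing $\omega$), and I would argue that every such relation is generated by the per-context normalizations $\sum_a p_i(a)=1$ and by the no-contextual-bias identities $\sum_b p_j(a\cap b)=\sum_{b'}p_i(a\cap b')$ for $a\in\mathcal{F}_i\cap\mathcal{F}_j$, so the matching combination of right-hand sides also vanishes. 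Hence the solution set $S\subseteq\mathbb{R}^{\Omega}$ is a nonempty affine subspace. I would then minimize $\|x\|_1$ over $S$: the sublevel sets $\{\|x\|_1\le R\}$ are compact, so for large $R$ the set $S\cap\{\|x\|_1\le R\}$ is nonempty and compact and the continuous $\|\cdot\|_1$ attains a minimum there at some $x^\star$; letting $P$ be the additive set function with $P(\{\omega\})=x^\star_\omega$, it satisfies N2, N3 and compatibility by construction, and N1 because the competing $P'$ in N1 are precisely the functions whose vectors of singleton values lie in $S$, over which $x^\star$ is $L^1$-minimal. Thus $(\Omega,\mathcal{F},P)$ is a negative probability compatible with the $p_i$'s.

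The hard part will be the feasibility step --- proving that ``not contextually biased'' is exactly the obstruction to solvability of the linear system, i.e. that the no-bias identities plus normalization generate all linear dependencies among its left-hand sides. I would try this by exhibiting an explicit spanning set for those dependencies (the trivial within-context relation $\sum_a(\text{row for }a)=(\text{row for }\Omega)$ for each $i$, together with the relations witnessing $p_i(a)=p_j(a)$ on shared atoms) or, more robustly, via Farkas' lemma, where an infeasibility certificate would have to be a nontrivial combination of no-bias-type relations with nonzero right-hand side --- forbidden by hypothesis. The subtlety to watch is a pathological overlap pattern among the $\mathcal{F}_i$ producing a higher-order dependency invisible at the pairwise level; for the Bell-type and Mach--Zehnder scenarios the paper cares about, where any two contexts overlap in an algebra generated by a single common coarsening, this does not arise and the argument closes directly.
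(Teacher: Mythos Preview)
The paper does not give its own proof of this proposition: it simply cites \citet{al-safi_simulating_2013}, \citet{oas_exploring_2014}, and \citet{abramsky_operational_2014}. Your plan---reduce compatibility to a finite linear system in the singleton masses, argue feasibility via the Fredholm alternative (or Farkas), and then minimize $\|\cdot\|_1$ over the resulting nonempty closed affine solution set to secure N1---is exactly the style of argument those references use (Abramsky--Brandenburger phrase it as existence of a global section over $\mathbb{R}$; Oas et al.\ set up and solve the linear system directly). Your easy direction is clean and correct as written.

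Your caution about the feasibility step is well placed and is not merely stylistic. As the proposition is stated---with arbitrary subalgebras $\mathcal{F}_i\subseteq\mathcal{F}$---pairwise agreement on $\mathcal{F}_i\cap\mathcal{F}_j$ need \emph{not} generate all linear dependencies among the constraint rows. A small witness: take $\Omega=\{1,2,3\}$ and let $\mathcal{F}_1,\mathcal{F}_2,\mathcal{F}_3$ have atom sets $\{\{1\},\{2,3\}\}$, $\{\{2\},\{1,3\}\}$, $\{\{3\},\{1,2\}\}$. Then $\mathcal{F}_i\cap\mathcal{F}_j=\{\emptyset,\Omega\}$ for all $i\neq j$, so ``not contextually biased'' is automatic; yet any compatible signed $P$ would force $P(\{1\})+P(\{2\})+P(\{3\})=p_1(\{1\})+p_2(\{2\})+p_3(\{3\})$, and nothing in the pairwise condition pins this sum to $1$. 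The cited references avoid this by working inside measurement-scenario frameworks (contexts arising from a cover closed under intersection, or Bell-type product structures), where pairwise compatibility \emph{does} exhaust the relations. So your argument closes exactly where theirs do, and your hedge about ``pathological overlap patterns'' is the right qualification to keep.
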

\begin{proof}
See \citet{al-safi_simulating_2013,oas_exploring_2014,abramsky_operational_2014}
for different proofs. 
\end{proof}

\subsection{Interpretations of Negative Probabilities}

As we mentioned above, both Dirac and Feynman saw negative probabilities
as computational devices. Though we can take such pragmatic view,
as negative probabilities help explore certain situations of interest
in quantum mechanics (see \citet{oas_exploring_2014} for an example),
the question still remains as to their meaning. Here we discuss some
proposals on how to interpret negative probabilities. 

Let us start with the interpretation of negative probabilities in
terms of two disjoint probability measures, $\mu^{+}$ and $\mu^{-}$,
initially suggested by \citet{burgin_interpretations_2010,burgin_negative_2012}
and then expanded and formalized by \citet{abramsky_operational_2014}.
Here we follow the interpretation as presented by \citet{abramsky_operational_2014}.
The main idea of this interpretation comes from the well-known fact
(see \citet{rao_theory_1983}) that it is possible to decompose a
signed measure $\mu$ into two non-negative ones, $\mu^{-}$ and $\mu^{+}$,
such that 
\[
\mu=\mu^{+}-\mu^{-}.
\]
Following this idea, \citet{abramsky_operational_2014} creates two
copies of the sample space, namely $\Omega\times\left\{ +,-\right\} $,
giving them a new dimension corresponding to $+$ or $-$. For example,
for the case of three random variables $\mathbf{X}$, $\mathbf{Y}$,
and $\mathbf{Z}$, the set of all elementary events would be $\left\{ \omega_{xyz},\omega_{xy\overline{z}},\ldots,\omega_{\overline{x}\overline{y}z},\omega_{\overline{x}\overline{y}\overline{z}}\right\} $,
whereas the expanded set $\Omega\times\left\{ +,-\right\} $ would
have as elementary events $\left\{ \omega_{xyz+},\ldots,\omega_{\overline{x}\overline{y}\overline{z}+},\omega_{xyz-},\ldots,\omega_{\overline{x}\overline{y}\overline{z}-}\right\} $.
Because of the above decomposition, they define a probability over
the set $+$ and $-$ such that 
\[
P=p^{+}-p^{-},
\]
where now $P$ is the negative probability and $p^{+}$ and $p^{-}$
can be interpreted as proper probability distributions over $\left(\Omega,\pm\right)$.
To interpret $P$, \citet{abramsky_operational_2014} proposes an
effect akin to interference. When we observe an event, say corresponding
to the element $\omega_{xyz+}$, we use $p^{+}$ as the distribution
to create our data table, and similarly for $\omega_{xyz-}$. However,
the counting due to a $-$ element can annihilate a counting for a
$+$ element, and vice versa. In a certain sense, this interpretation
of negative probabilities is conceptually similar to some hidden variable
approaches in the literature, as for example the virtual photon model
of Suppes and de Barros \citep{suppes_violation_1996,suppes_random-walk_1994,suppes_proposed_1996,suppes_particle_1996,suppes_diffraction_1994,suppes_collection_1996}
or the ESR model \citep{garola_esr_2009,sozzo_hilbert_2010,garola_generalized_2011,garola_modified_2011,garola_finite_2014},
to cite a few. In these approaches, an underlying hidden process can
erase an outcome that would be possible if it were not for the interference
of non-observable events. However, the problem with this interpretation
is that, even though it is based on a frequentist view, it does not
provide a way of counting actual observable clicks on a measurement
device and interpret them as negative probabilities; in other words,
it assumes some non-accessible reality. 

We now turn to another frequentist interpretation of negative probabilities,
this one proposed by Khrennikov \citep{khrennikov_p-adic_1993,khrennikov_p-adic_1993-1,khrennikov_p-adic_1994,khrennikov_discrete_1994,khrennikov_interpretations_2009}.
Khrennikov starts with the idea that, in the frequentist view, the
probability of an event is defined as by the number of times such
an event occurs in an infinite sequence of possible outcomes or ensembles.
Following \citet{khrennikov_interpretations_2009}, let $S_{N}$ be
a sequence of $N$ ensembles with, $S_{N}=\left\{ s_{1},s_{2},\ldots,s_{N}\right\} $.
For each of the ensembles $s_{i}$, one can ask whether the property
represented by the random variable $\mathbf{A}$ has the value $a$
or not, and let $S\left(\mathbf{A}=a\right)$ be the subset of all
ensembles such that $\mathbf{A}=a$. Then, in the standard frequentist
interpretation, the probability $p\left(\mathbf{A}=a\right)$ is given
by
\begin{equation}
p\left(\mathbf{A}=a\right)=\lim_{N\rightarrow\infty}\frac{\left|S\left(\mathbf{A}=a\right)\cap S_{N}\right|}{\left|S\right|},\label{eq:frequentist-probability}
\end{equation}
where $\left|\cdot\right|$ represents the cardinality of a set. Khrennikov
then argues that there are ensembles for which the limit in (\ref{eq:frequentist-probability})
does not converge, and for such cases negative probabilities can be
obtained as the result of a regularization procedure or order. In
such sense, negative probabilities come as the result of quasi-random
sequences that violate the principle of statistical stabilization
\citep{khrennikov_p-adic_1993}. Khrennikov then proposes to generalize
probabilities coming from sequences that violate the principle of
statistical stabilization as measures taking values not only on the
field $\mathbb{R}$, but also on the $p$-adic extensions of the set
of rationals $\mathbb{Q}$, i.e. $\mathbb{Q}_{p}$. We recall that
$\mathbb{R}$ is defined, through Cauchy sequences, as the completion
of $\mathbb{Q}$ under the Euclidian norm. Similarly, $\mathbb{Q}_{p}$
is the completion of $\mathbb{Q}$ under the $p$-adic norm (see \citet{khrennikov_interpretations_2009}
for a clear exposition of $\mathbb{Q}_{p}$ and its properties). Once
he does that, he shows that certain sequences that have probability
zero in the sense of (\ref{eq:frequentist-probability}) would have
negative probabilities in their $p$-adic extension, whereas sequences
of probability one would have $p$-adic values greater than one. Thus,
according to Khrennikov, we can interpret negative (greater than one)
probabilities as events of probability zero (one) for sequences that
violate the principle of statistical stabilization. Here we note that
contextual random variables are quasi-random, and violate the principle
of statistical stabilization. 

We now turn to the meaning of the minimum L1 norm we propose for negative
probabilities. Similarly to negative probabilities, the sub or super
additivity of upper and lower probabilities allows for a large number
of solutions to the joint probability that is consistent with the
marginals. One possibility is to think of upper and lowers as subjective
measures of belief based on inconsistent information \citep{suppes_existence_1991}.
As such, it can be argued that, since upper and lowers do not add
to one as standard probabilities do, one should choose among the many
different distributions those whose sums are as close to one as possible.
This is, in a certain sense, similar to what the minimum L1 norm does
for negative probabilities. As such, this norm, which quantifies how
much a negative probability deviates from a proper probability, provides
us a measure of how inconsistent the correlations between random variables
are \citep{de_barros_decision_2014}. 

We end this section with one last general comment. Instead of using
negative probabilities, it is possible to simply extend the probability
space such that when we talk about correlations between experimentally
observable variables, as proposed by \citet{dzhafarov_selectivity_2012,dzhafarov_all-possible-couplings_2013}.
To understand this point, imagine we start with three variables $\mathbf{X}$,
\textbf{$\mathbf{Y}$}, and $\mathbf{Z}$, as in the above example.
Instead of thinking of them as three variables, we could think of
them as six, one for each experimental context: $\mathbf{X}_{Y}$,
$\mathbf{X}_{Z}$, ..., $\mathbf{Z}_{Y}$. It is easy to show that
in some important physical examples, such as the famous Bell-EPR setup,
such extension of the probability space is sufficient to grant the
existence of a joint probability distribution, but at the cost of
having $\mathbf{X}_{Y}\neq\mathbf{X}_{Z}$. Thus, the apparent inconsistencies
mentioned in the previous paragraph could be argued to come from an
identity assumption for the random variables: that a random variable
remains the same in different contexts \citep{dzhafarov_random_2013,dzhafarov_all-possible-couplings_2013,dzhafarov_contextuality_2014,dzhafarov_qualified_2014}.
As such, the minimum norm could be interpreted as a measure of contextuality
\citep{de_barros_unifying_2014,de_barros_negative_2014,oas_exploring_2014}.

\section{The Mach-Zehnder Interferometer and Negative Probabilities\label{sec:The-Mach-Zehnder-Interferometer}}

Now that we saw the basic relationships between negative probabilities
and upper and Kolmogorovian probabilities, we turn our attention to
the two slit experiment its simplified version of the Mach-Zehnder
interferometer, schematically shown in Figure \ref{fig:Mach-Zehnder-interferometer}.
\begin{figure}
\begin{centering}
\includegraphics{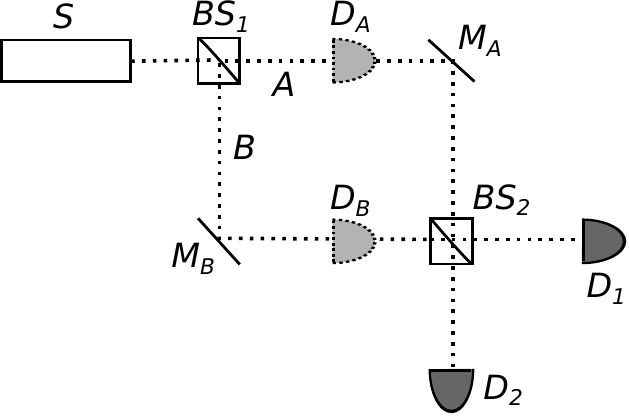}
\par\end{centering}

\protect\caption{\label{fig:Mach-Zehnder-interferometer}Schematics of a Mach-Zehnder
interferometer. A light beam from a source $S$ is divided into two
equal-intensity beams by the beamsplitter $BS_{1}$. The beams are
reflected by mirrors $M_{A}$ and $M_{B}$, and then recombined by
the beamsplitter $BS_{2}$. Photons from $S$ are then detected by
photodetectors $D_{1}$ or $D_{2}$. The count rates on $D_{1}$ and
$D_{2}$ depend on the geometry of the system, in particular the optical
distances between $BS_{1}$, $BS_{2}$, and $M_{A}$ and $M_{B}$.
In the which-path version of this experiment, detectors $D_{A}$ and/or
$D_{B}$ may be placed on each arm of the interferometer to determine
the trajectory of the photon. }
\end{figure}
 We will not attempt to give a full quantum-mechanical description
of this experiment, but instead focus on an elementary representation
of the main ideas behind it. Intuitively, each arm of the interferometer,
$A$ and $B$, corresponds to a possible path the particle can take,
which is equivalent to a particle going through one slit or the other
on the two slit experiment. However, the Mach-Zehnder differs from
the double-slit experiment as the particle only has two possible outcomes,
a detection in $D_{1}$ or $D_{2}$, as opposed to an infinite number
of points on a screen. Such two possible outcomes could be thought
as two points on the screen on the screen corresponding to a maximum
and minimum intensities in the interference pattern. 

To elaborate on the analogy with the two slit experiment, let us think
about the experiment in terms of particles. First, we can select an
interferometer setup such that we have constructive interference in
$D_{1}$ and destructive in $D_{2}$. In other words, the lengths
of the interferometer arms are chosen such that $P(\mathbf{D}_{1}=1)=1$
and $P(\mathbf{D}_{2}=1)=0,$ where $\mathbf{D}_{1}$ ($\mathbf{D}_{2}$)
is a $\pm1$-valued random variable representing a detection on $D_{1}$
($D_{2}$) when its value is $1$ and no detection when $-1$. From
now on we will use the standard notation $p_{d_{1}}=P(\mathbf{D}_{1}=1)$,
$p_{\overline{d}_{1}}=P(\mathbf{D}_{1}=-1)$, $p_{\overline{d}_{2}}=P(\mathbf{D}_{2}=-1)$
and so on. With this notation, our interferometer is such that $p_{d_{1}}=p_{\overline{d}_{2}}=1$
and $p_{\overline{d}_{1}}=p_{d_{2}}=0$. 

Now that the interferometer is set up, let us examine the two possible
classical models (according to Feynman) behind it: the wave and particle
models. We start with the wave point of view. Let $\psi=A\cos\left(\omega t\right)$
represent a coherent wave arriving at the beam splitter $BS_{1}$
at time $t$ and being split in both directions, $A$ \emph{and }$B$.
The wave going through $A$ is unchanged by $BS_{1}$, and arrives
at $M_{A}$ as $\frac{A}{2}\cos\left(\omega t+\phi_{1}\right)$, where
$\phi_{1}$ is a phase that depends on the geometry of the interferometer,
specifically on the distance between $BS_{1}$ and $M_{A}$. At $M_{A}$
it becomes $-\frac{A}{2}\sin\left(\omega t+\phi_{1}\right)$ due to
a $\pi/2$ phase shift upon reflection, and arrives at $BS_{2}$ as
$-\frac{A}{2}\sin\left(\omega t+\phi_{1}+\phi_{2}\right)$. For the
wave going through $B$, it arrives at $M_{B}$ as $-\frac{A}{2}\sin\left(\omega t+\phi_{2}\right)$
and at $BS_{2}$ as $-\frac{A}{2}\cos\left(\omega t+\phi_{2}+\phi_{1}\right)$,
where we assume for the geometry that distance between $BS_{1}$ and
$M_{B}$ is the same as the distance between $M_{A}$ and $BS_{2}$
(and similarly for $BS_{1}$to $M_{A}$ and $M_{B}$ to $BS_{2}$).
The beam splitter $BS_{2}$ now recombines the two waves coming from
$A$ and \textbf{$B$}, and the outputs on $D_{1}$ and $D_{2}$ are
the superposition of those waves. In other words, 
\begin{eqnarray}
\psi_{D_{1}} & = & -\frac{A}{2}\sin\left(\omega t+\phi_{1}+\phi_{2}+\frac{\pi}{2}\right)-\frac{A}{2}\cos\left(\omega t+\phi_{2}+\phi_{1}\right)\label{eq:psid1}\\
 & = & -\frac{A}{2}\cos\left(\omega t+\phi\right)-\frac{A}{2}\cos\left(\omega t+\phi\right)\nonumber \\
 & = & -A\cos\left(\omega t+\phi\right),\nonumber 
\end{eqnarray}
where the first term on the rhs is the reflected wave from $A$, and
$\phi=\phi_{1}+\phi_{2}$. For $D_{2}$ we obtain, with now the wave
$B$ getting a phase of $\pi/2$,
\begin{eqnarray}
\psi_{D_{2}} & = & -\frac{A}{2}\sin\left(\omega t+\phi\right)-\frac{A}{2}\cos\left(\omega t+\phi+\frac{\pi}{2}\right)\label{eq:psid2}\\
 & = & -\frac{A}{2}\sin\left(\omega t+\phi\right)+\frac{A}{2}\sin\left(\omega t+\phi\right)=0.
\end{eqnarray}
We can now compute the mean intensity of the entering wave, $\psi$,
\[
I_{S}=\langle\psi^{2}\rangle_{t}=\frac{A^{2}}{2},
\]
where 
\[
\langle f\rangle_{t}=\frac{1}{T}\lim_{\omega T\gg1}\int_{t}^{t+T}f\, dt'
\]
 represents the time average. The intensity at $D_{1}$, is
\begin{eqnarray*}
I_{D_{1}} & = & \frac{A^{2}}{2},
\end{eqnarray*}
whereas the intensity at $D_{2}$ is 
\begin{eqnarray*}
I_{D_{2}} & = & 0,
\end{eqnarray*}
consistent with the value for the source $S$. These particular values
for the intensities at $D_{1}$ and $D_{2}$ present the highest contrast
between the intensities at each detector, or, as it is often referred,
maximum visibility (of interference). So, to summarize, according
to the wave model we see no wave energy arriving at $D_{2}$ because
of destructive interference due to the relative phases of the different
paths the wave traveled. 

Now let us examine the view that photons are particles, and let us
assume that we can control the intensity of the source such that one
particle at a time goes through the interferometer. A particle comes
out of the source $S$ and enters the interferometer through the beam
splitter $BS_{1}$. Beam splitters divide beams into two equal intensity
ones. This translates into a particle having probability $1/2$ of
going to either arm $A$ or $B$. For the sake of argument, let us
assume that the particle went into arm $A$. Once it leaves the interferometer,
it is reflected at mirror $M_{A}$ and reaches another beam splitter
$BS_{2}$. Once again, it has probability $1/2$ of going on either
direction, since it interacts only locally with $BS_{2}$. In other
words, it cannot possibly have any information about the geometry
of path $B$, or even if it is not simply closed with the presence
of a physical barrier. Therefore, the probability of this particle
reaching $D_{1}$ is the same as $D_{2}$, and it equals $1/2$. The
same analysis can be applied to the photon going through arm $B$.
Therefore, from a particle point of view, $p_{d_{1}}=p_{d_{2}}=1/2$.
This is in stark contradiction with the wave result. 

In the standard interpretation of quantum mechanics, this contradiction
is resolved by stating that one cannot simultaneously assign two complimentary
properties to a quantum system. In the above case, we cannot assign
the property of going through path $A$ or $B$ (which is what happens
if we have a particle). To be able to say that a particle went through
$A$ or $B$, we need to actually place a detector $D_{A}$ and $D_{B}$
in the paths. At the same time, if we place such detectors, we destroy
the wave-like behavior, and its associated probabilities at $D_{1}$
and $D_{2}$. If the detectors simply destroy the particle, then we
have obviously an impossibility in obtaining the joint probability
distribution in a trivial way, as we can show by the following simple
example (which we also spell out in more detail below). 

Let $D_{X}$ and $D_{Y}$ be two detectors that absorb photons, and
let us put $D_{Y}$ at the end of a source $S$ that produces photons.
So, if a photon is emitted, the probability of observing it is $p_{d_{y}}=1$.
However, if we put $D_{X}$ in between $S$ and $D_{Y}$, we will
have that $p_{d_{x}}=1$ and $p_{d_{y}}=0$. The observable terms
are the following.
\[
p_{d_{x}d_{y}}=p_{\overline{d}_{x}\overline{d}_{y}}=p_{\overline{d}_{x}d_{y}}=0,
\]
\[
p_{d_{x}\overline{d}_{y}}=1.
\]
But this leads to a contradiction, as 
\[
p_{d_{y}}=1=p_{d_{x}d_{y}}+p_{\overline{d}_{x}d_{y}}=0.
\]
That contradiction comes from an obvious reason: we have different
experiments, and therefore the random variable $\mathbf{D}_{Y}$ representing
a measurement in one experiment cannot be the same as the $\mathbf{D}_{Y}$
in the other experiment. The assumption of the existence of a joint
distribution is equivalent to the assumption that both $\mathbf{D}_{Y}$'s
are the same.

In the Mach-Zehnder, an analogous case to the example above would
be the following. With the setup in Figure \ref{fig:Mach-Zehnder-interferometer},
we split the experiment into two types: destructive and non-destructive
measurements. A destructive measurement happens when the observed
system is not available for any other measurements afterward. For
example, in many photodetection apparatuses, the photon is absorbed
by the device and ceases to exist. A non-destructive measurement is
the one where the system is available for later measurements. For
each type of experiment, there are four possible experimental conditions,
which we label as Case 1 to Case 8. We start with destructive measurements. 
\begin{description}
\item [{Case~1~($D_{1}$,~$D_{2}$~only)}] This case corresponds to
the standard Mach-Zehnder with no \emph{which-path} information, since
no detector is put on either arm of the interferometer. Thus, a joint
probability distribution exists for all the random variables involved.
When this is the case, we have that $p_{d_{1}\overline{d}_{2}}=1$
and $p_{d_{1}d_{2}}=p_{\overline{d}_{1}d_{2}}=p_{\overline{d}_{1}\overline{d}_{2}}=0$.
Here $p_{d_{1}d_{2}}$ and $p_{\overline{d}_{1}\overline{d}_{2}}$
are set to zero almost by definition, as we are considering cases
where we have one and only one photo-detection. 
\item [{Case~2~($D_{1}$,~$D_{2}$,~$D_{A}$)}] In this case, if we
have a detection on $D_{A}$, we have no detection on $D_{1}$ and
$D_{2}$ (intuitively, the photon was absorbed by the detector). On
the other hand, if we have no detection on $D_{A}$, $D_{1}$ and
$D_{2}$ are equiprobable, since the interference effects are destroyed
by the presence of a detection. Thus, $p_{d_{a}d_{1}d_{2}}=p_{d_{a}d_{1}\overline{d}_{2}}=p_{d_{a}\overline{d}_{1}\overline{d}_{2}}=p_{d_{a}\overline{d}_{1}d_{2}}=$
$p_{\overline{d}_{a}d_{1}d_{2}}=p_{\overline{d}_{a}\overline{d}_{1}\overline{d}_{2}}=0$,
and $p_{\overline{d}_{a}d_{1}\overline{d}_{2}}=p_{\overline{d}_{a}\overline{d}_{1}d_{2}}=\frac{1}{2}$.
\item [{Case~3~($D_{1}$,~$D_{2}$,~$D_{B}$)}] Similarly to Case 2
above, here $p_{d_{b}d_{1}d_{2}}=p_{d_{b}d_{1}\overline{d}_{2}}=p_{d_{b}\overline{d}_{1}\overline{d}_{2}}=p_{d_{b}\overline{d}_{1}d_{2}}=$
$p_{\overline{d}_{b}d_{1}d_{2}}=p_{\overline{d}_{b}\overline{d}_{1}\overline{d}_{2}}=0$,
and $p_{\overline{d}_{b}d_{1}\overline{d}_{2}}=p_{\overline{d}_{b}\overline{d}_{1}d_{2}}=\frac{1}{2}$.
\item [{Case~4~($D_{1}$,~$D_{2}$,~$D_{A}$,~and~$D_{B}$)}] This
simply tells us that we can only observe in $A$ or $B$, but nowhere
else, since the detectors in $A$ and $B$ destroy the photon, not
allowing it to reach $D_{1}$ or $D_{2}$. Then $p_{d_{a}d_{b}d_{1}d_{2}}=p_{d_{a}d_{b}\overline{d}_{1}d_{2}}=p_{d_{a}d_{b}d_{1}\overline{d}_{2}}=p_{d_{a}d_{b}\overline{d}_{1}\overline{d}_{2}}=0$,
$p_{\overline{d}_{a}d_{b}d_{1}d_{2}}=p_{\overline{d}_{a}d_{b}\overline{d}_{1}d_{2}}=p_{\overline{d}_{a}d_{b}d_{1}\overline{d}_{2}}=0$,
$p_{d_{a}\overline{d}_{b}d_{1}d_{2}}=p_{d_{a}\overline{d}_{b}\overline{d}_{1}d_{2}}=p_{d_{a}\overline{d}_{b}d_{1}\overline{d}_{2}}=0$,
$p_{\overline{d}_{a}\overline{d}_{b}d_{1}d_{2}}=p_{\overline{d}_{a}\overline{d}_{b}\overline{d}_{1}d_{2}}=p_{\overline{d}_{a}\overline{d}_{b}d_{1}\overline{d}_{2}}=p_{\overline{d}_{a}\overline{d}_{b}\overline{d}_{1}\overline{d}_{2}}=0$,
and $p_{\overline{d}_{a}d_{b}\overline{d}_{1}\overline{d}_{2}}=p_{d_{a}\overline{d}_{b}\overline{d}_{1}\overline{d}_{2}}=\frac{1}{2}$. 
\end{description}
It is easy to see that we have inconsistencies between the random
variables for each case, because Case 4 gives us a joint probability
distribution for all observables that is inconsistent with Case 1.
This simply tells us that each experimental context gives different
distributions to the random variables, as the marginal expectations
are different for each experimental condition. 

For the non-destructive measurement we have the following experimental
outcomes. 
\begin{description}
\item [{Case~5~($D_{1}$,~$D_{2}$~only)}] This is clearly identical
to Case 1, where $p_{d_{1}\overline{d}_{2}}=1$ and $p_{d_{1}d_{2}}=p_{\overline{d}_{1}d_{2}}=p_{\overline{d}_{1}\overline{d}_{2}}=0$. 
\item [{Case~6~($D_{1}$,~$D_{2}$,~$D_{A}$)}] In this case, if we
have a detection on $D_{A}$, but there will also be a detection on
either $D_{1}$ or $D_{2}$. Furthermore, regardless of the outcomes
on $D_{A}$, detections on $D_{1}$ and $D_{2}$ are equiprobable,
since the interference effects are destroyed by the presence of a
detection. Thus, $p_{d_{a}d_{1}d_{2}}=p_{d_{a}\overline{d}_{1}\overline{d}_{2}}=$
$p_{\overline{d}_{a}d_{1}d_{2}}=p_{\overline{d}_{a}\overline{d}_{1}\overline{d}_{2}}=0$,
and $p_{d_{a}d_{1}\overline{d}_{2}}=p_{d_{a}\overline{d}_{1}d_{2}}=p_{\overline{d}_{a}d_{1}\overline{d}_{2}}=p_{\overline{d}_{a}\overline{d}_{1}d_{2}}=\frac{1}{4}$.
\item [{Case~7~($D_{1}$,~$D_{2}$,~$D_{B}$)}] Similarly to Case 2
above, here $p_{d_{b}d_{1}d_{2}}=p_{d_{b}\overline{d}_{1}\overline{d}_{2}}=$
$p_{\overline{d}_{b}d_{1}d_{2}}=p_{\overline{d}_{b}\overline{d}_{1}\overline{d}_{2}}=0$,
and $p_{d_{b}d_{1}\overline{d}_{2}}=p_{d_{b}\overline{d}_{1}d_{2}}=p_{\overline{d}_{b}d_{1}\overline{d}_{2}}=p_{\overline{d}_{b}\overline{d}_{1}d_{2}}=\frac{1}{4}$.
\item [{Case~8~($D_{1}$,~$D_{2}$,~$D_{A}$,~and~$D_{B}$)}] This
simply tells us that we can only observe in $A$ or $B$, but nowhere
else, since the detectors in $A$ and $B$ absorb the photon, not
letting it reach $D_{1}$ or $D_{2}$. Then $p_{d_{a}d_{b}d_{1}d_{2}}=p_{d_{a}d_{b}\overline{d}_{1}d_{2}}=p_{d_{a}d_{b}d_{1}\overline{d}_{2}}=p_{d_{a}d_{b}\overline{d}_{1}\overline{d}_{2}}=p_{\overline{d}_{a}d_{b}d_{1}d_{2}}=p_{d_{a}\overline{d}_{b}d_{1}d_{2}}=p_{\overline{d}_{a}\overline{d}_{b}d_{1}d_{2}}=p_{\overline{d}_{a}\overline{d}_{b}\overline{d}_{1}d_{2}}=p_{\overline{d}_{a}\overline{d}_{b}d_{1}\overline{d}_{2}}=p_{\overline{d}_{a}\overline{d}_{b}\overline{d}_{1}\overline{d}_{2}}==p_{\overline{d}_{a}d_{b}\overline{d}_{1}\overline{d}_{2}}=p_{d_{a}\overline{d}_{b}\overline{d}_{1}\overline{d}_{2}}=0$,
and $p_{\overline{d}_{a}d_{b}\overline{d}_{1}d_{2}}=p_{\overline{d}_{a}d_{b}d_{1}\overline{d}_{2}}=p_{d_{a}\overline{d}_{b}\overline{d}_{1}d_{2}}=p_{d_{a}\overline{d}_{b}d_{1}\overline{d}_{2}}=\frac{1}{4}$. 
\end{description}
As with the destructive measurements, we have inconsistencies between
the two complementary experimental conditions. This shows that which-path
information creates a context that is different from the one leading
to interference. In other words, the joint probabilities obtained
in the non-destructive measurement are once again incompatible with
the marginals for the interference patterns contained in Case 5. 

In both types of experiments, described by Cases 1 through 8, the
incompatibility of contexts is reflected in the non-existence of a
joint (quasi) negative probability distribution for all possible outcomes.
This reflects the strong contextuality of each setup, interference
or which-path, leading to observables $D_{1}$ and $D_{2}$ that are
contextuality biased. It is interesting at this point to notice that
this would represent, in a trivial way, a case where an experimenter
could choose to observe or not $D_{A}$ or $D_{B}$, and such observation
would change the probabilities in $D_{1}$ and $D_{2}$. Thus, in
a trivial sense, the observation of, say, $D_{A}$ or not could be
used to \emph{signal} another experimenter at $D_{1}$. Though this
is \emph{not} what is usually called signaling in the literature,
as it does not involve any spacelike separations between a transmitter
and a receiver, it does clarify the relationship between the absence
of contextual bias and the no-signalling condition. To distinguish
this, \citet{kofler_condition_2013} coined the term \emph{signaling
in time, }but here we use the term \emph{contextual measurement biases
}suggested by \citet{dzhafarov_contextuality_2014}. See \citet{oas_exploring_2014,dzhafarov_contextuality_2014}
for a somewhat more detailed discussion of this point, including the
relationship between the existence of probability distributions (including
negative ones) and signaling. 

Case 5-8 are equivalent to the spirit of Feynman's discussions about
the double-slit in his 1987 paper, and has been experimentally realized
by \citet{scully_feynmans_1994}. Even though \citet{scully_feynmans_1994}
had access to the outcomes of Case 6, 7, and 8 to infer the joint
probability distribution, they used counterfactual reasoning to compute
a negative probability distribution that was consistent with Case
5. To do so, they had to discard certain measurements from their marginals,
say, by only looking at cases where no detection happened at detector
$B$ in case 7. Let us examine the details of this counterfactual
computations. First one needs to determine what are the actual observable
conditions that constrain the marginal distributions. If we put detectors
on both paths, we observe 
\begin{equation}
P(d_{a}d_{b})=0,\label{eq:p1p2}
\end{equation}
 
\begin{equation}
P(\overline{d_{a}}d_{b})=\frac{1}{2},
\end{equation}
 
\begin{equation}
P(d_{a}\overline{d_{b}})=\frac{1}{2},
\end{equation}
 and 
\begin{equation}
P(\overline{d_{a}}\overline{d_{b}})=0,\label{eq:p1bp2b}
\end{equation}
which corresponds to having only one photon at a time. 

Whenever we observe in detector $D_{1}$ we do not observe in $D_{2}$,
and \emph{vice versa}. Furthermore, since we have a single photon,
we never observe in both detectors or in neither. Finally, interference
requires that we only observe in $D_{1}$. Therefore, 
\begin{equation}
P(d_{1}\overline{d}_{2})=1\label{eq:d1d2}
\end{equation}
 
\begin{equation}
P(\overline{d}_{1}d_{2})=P(d_{1}d_{2})=P(\overline{d}_{1}\overline{d}_{2})=0.\label{eq:d1notd2andothers}
\end{equation}

Now for what Feynman considered the disturbing issue. If we put a
detector in arm $A$ or $B$, from (\ref{eq:p1p2})-(\ref{eq:p1bp2b})
we can ``infer'' that whenever we observe the particle \emph{not
}being in $A$, then the particle must be (probability $1$) in $B$.
But when we block the path, the probabilities are 
\begin{equation}
P(\overline{d_{a}}d_{1}\overline{d}_{2})=P(\overline{d_{a}}\overline{d}_{1}d_{2})=\frac{1}{2},\label{eq:p1detect}
\end{equation}
 and 
\begin{equation}
P(\overline{d_{b}}d_{1}\overline{d}_{2})=P(\overline{d_{b}}\overline{d}_{1}d_{2})=\frac{1}{2}.\label{eq:p2detect}
\end{equation}
 The ``disturbing'' aspect comes from the nonmonotonicity of the
above probabilities. How can $P(\overline{d}_{1}d_{2})=0$, according
to $\eqref{eq:d1notd2andothers}$, while $P(\overline{d_{a}}\overline{d}_{1}d_{2})=\frac{1}{2}$,
from (\ref{eq:p1detect}), given that $\overline{d_{a}}\overline{d}_{1}d_{2}$
is a proper subset of all events where $\overline{d}_{1}d_{2}$?

Of course, this nonmonotonic property cannot be reproduced by Kolmogorov's
axioms. To see this, let $S_{1}$ and $S_{2}$ be two sets in $\mathcal{F}$
such that $S_{1}\subseteq S_{2}$ (as is the case for $S_{1}=\left\{ \omega_{i}\in\Omega|\mathbf{D}_{1}=-1,\mathbf{D}_{2}=1\right\} $
and $S_{2}=\left\{ \omega_{i}\in\Omega|\mathbf{A}=-1,\mathbf{D}_{1}=-1,\mathbf{D}_{2}=1\right\} $
above). Then, we can construct a set $S_{1}'=S_{2}\backslash S_{1}$
such that $S_{1}\cup S_{1}'=S_{2}$ and $S_{1}\cap S_{1}'=\emptyset$.
From K3 we have that $p\left(S_{1}\cup S_{1}'\right)=p\left(S_{1}\right)+p\left(S_{1}'\right)=p\left(S_{2}\right)$,
and from K1 we have at once that $p\left(S_{1}\right)\leq p\left(S_{2}\right)$
if $S_{1}\subseteq S_{2}$, which is clearly violated by the probabilities
above. Notice that in order to prove monotonicity, we had to use the
non-negativity axiom K1. However, since negative probabilities violate
K1, they may be nonmonotonic. For instance, from $P\left(S_{1}\cup S_{1}'\right)=P\left(S_{1}\right)+P\left(S_{1}'\right)=p\left(S_{2}\right)$,
it is possible to have $P\left(S_{1}\right)>P\left(S_{2}\right)$
if $P\left(S_{1}'\right)<0$. 

Before we compute the joint (quasi) negative probabilities from the
assumptions above, let us examine in more detail (\ref{eq:p1detect})
and (\ref{eq:p2detect}). The fact that each add to one corresponds
to a selection of experiments where no detection happens on $D_{A}$
or $D_{B}$. In other words, we are only looking at a subset of all
possible experimental outcomes (essentially, this is equivalent to
a postselection of data). In fact, we can see that (\ref{eq:p1detect})
and (\ref{eq:p2detect}) are distinct from what one observes in Case
2 and Case 3 or Case 6 and Case 7, which, as we pointed out earlier,
are incompatible with Case 1 or Case 5, respectively. In this restricted
data set, the counterfactual reasoning leads to a weaker context-dependency
between variables, allowing for the existence of a joint negative
probability distribution, as we now show. 

From (\ref{eq:p1detect}) and (\ref{eq:p2detect}) we obtain the following
set of linear equations 
\begin{equation}
P\left(\overline{d_{a}}\cdot d_{1}\overline{d}_{2}\right)=P\left(\overline{d_{a}}d_{b}d_{1}\overline{d}_{2}\right)+P\left(\overline{d_{a}}\overline{d_{b}}d_{1}\overline{d}_{2}\right)=\frac{1}{2},\label{eq:system1}
\end{equation}
 
\begin{equation}
P\left(\overline{d_{a}}\cdot\overline{d}_{1}d_{2}\right)=P\left(\overline{d_{a}}d_{b}\overline{d}_{1}d_{2}\right)+P\left(\overline{d_{a}}\overline{d_{b}}\overline{d}_{1}d_{2}\right)=\frac{1}{2},\label{eq:system2}
\end{equation}
 
\begin{equation}
P\left(\cdot\overline{d_{b}}\overline{d}_{1}d_{2}\right)=P\left(d_{a}\overline{d_{b}}\overline{d}_{1}d_{2}\right)+P\left(\overline{d_{a}}\overline{d_{b}}\overline{d}_{1}d_{2}\right)=\frac{1}{2},\label{eq:system2-1}
\end{equation}
and 
\begin{equation}
P\left(\cdot\overline{d_{b}}d_{1}\overline{d}_{2}\right)=P\left(d_{a}\overline{d_{b}}d_{1}\overline{d}_{2}\right)+P\left(\overline{d_{a}}\overline{d_{b}}d_{1}\overline{d}_{2}\right)=\frac{1}{2}.\label{eq:system2-2}
\end{equation}
 From (\ref{eq:d1d2})--(\ref{eq:d1notd2andothers}), we also obtain
that 
\begin{equation}
P\left(\cdot\cdot d_{1}d_{2}\right)=P\left(d_{a}d_{b}d_{1}d_{2}\right)+P\left(\overline{d_{a}}d_{b}d_{1}d_{2}\right)+P\left(d_{a}\overline{d_{b}}d_{1}d_{2}\right)+P\left(\overline{d_{a}}\overline{d_{b}}d_{1}d_{2}\right)=0,\label{eq:system3}
\end{equation}
 
\begin{equation}
P\left(\cdot\cdot d_{1}\overline{d}_{2}\right)=P\left(d_{a}d_{b}d_{1}\overline{d}_{2}\right)+P\left(\overline{d_{a}}d_{b}d_{1}\overline{d}_{2}\right)+P\left(d_{a}\overline{d_{b}}d_{1}\overline{d}_{2}\right)+P\left(\overline{d_{a}}\overline{d_{b}}d_{1}\overline{d}_{2}\right)=1,\label{eq:system4}
\end{equation}
 
\begin{equation}
P\left(\cdot\cdot\overline{d}_{1}d_{2}\right)=P\left(d_{a}d_{b}\overline{d}_{1}d_{2}\right)+P\left(\overline{d_{a}}d_{b}\overline{d}_{1}d_{2}\right)+P\left(d_{a}\overline{d_{b}}\overline{d}_{1}d_{2}\right)+P\left(\overline{d_{a}}\overline{d_{b}}\overline{d}_{1}d_{2}\right)=0,\label{eq:system5}
\end{equation}
 
\begin{equation}
P\left(\cdot\cdot\overline{d}_{1}\overline{d}_{2}\right)=P\left(d_{a}d_{b}\overline{d}_{1}\overline{d}_{2}\right)+P\left(\overline{d_{a}}d_{b}\overline{d}_{1}\overline{d}_{2}\right)+P\left(d_{a}\overline{d_{b}}\overline{d}_{1}\overline{d}_{2}\right)+P\left(\overline{d_{a}}\overline{d_{b}}\overline{d}_{1}\overline{d}_{2}\right)=0.\label{eq:system6}
\end{equation}
 Finally, (\ref{eq:p1p2})--(\ref{eq:p1bp2b}) yields 
\begin{equation}
P\left(d_{a}d_{b}\cdot\cdot\right)=P\left(d_{a}d_{b}d_{1}d_{2}\right)+P\left(d_{a}d_{b}\overline{d}_{1}d_{2}\right)+P\left(d_{a}d_{b}d_{1}\overline{d}_{2}\right)+P\left(d_{a}d_{b}\overline{d}_{1}\overline{d}_{2}\right)=0,\label{eq:system7}
\end{equation}
 
\begin{equation}
P\left(d_{a}\overline{d_{b}}\cdot\cdot\right)=P\left(d_{a}\overline{d_{b}}d_{1}d_{2}\right)+P\left(d_{a}\overline{d_{b}}\overline{d}_{1}d_{2}\right)+P\left(d_{a}\overline{d_{b}}d_{1}\overline{d}_{2}\right)+P\left(d_{a}\overline{d_{b}}\overline{d}_{1}\overline{d}_{2}\right)=\frac{1}{2},\label{eq:system8}
\end{equation}
 
\begin{equation}
P\left(\overline{d_{a}}d_{b}\cdot\cdot\right)=P\left(\overline{d_{a}}d_{b}d_{1}d_{2}\right)+P\left(\overline{d_{a}}d_{b}\overline{d}_{1}d_{2}\right)+P\left(\overline{d_{a}}d_{b}d_{1}\overline{d}_{2}\right)+P\left(\overline{d_{a}}d_{b}\overline{d}_{1}\overline{d}_{2}\right)=\frac{1}{2},\label{eq:system9}
\end{equation}
 
\begin{equation}
P\left(\overline{d_{a}}\overline{d_{b}}\cdot\cdot\right)=P\left(\overline{d_{a}}\overline{d_{b}}d_{1}d_{2}\right)+P\left(\overline{d_{a}}\overline{d_{b}}\overline{d}_{1}d_{2}\right)+P\left(\overline{d_{a}}\overline{d_{b}}d_{1}\overline{d}_{2}\right)+P\left(\overline{d_{a}}\overline{d_{b}}\overline{d}_{1}\overline{d}_{2}\right)=0.\label{eq:system10}
\end{equation}
 As a last condition, from axiom N2, all probabilities of elementary
events must add to one, i.e., 
\begin{eqnarray}
P\left(d_{a}d_{b}d_{1}d_{2}\right)+P\left(d_{a}d_{b}d_{1}\overline{d}_{2}\right)+P\left(d_{a}d_{b}\overline{d}_{1}d_{2}\right)+P\left(d_{a}d_{b}\overline{d}_{1}\overline{d}_{2}\right)+\label{eq:system11}\\
P\left(d_{a}\overline{d_{b}}d_{1}d_{2}\right)+P\left(a\overline{d_{b}}d_{1}\overline{d}_{2}\right)+P\left(a\overline{d_{b}}\overline{d}_{1}d_{2}\right)+P\left(a\overline{d_{b}}\overline{d}_{1}\overline{d}_{2}\right)+\nonumber \\
P\left(\overline{d_{a}}d_{b}d_{1}d_{2}\right)+P\left(\overline{d_{a}}d_{b}d_{1}\overline{d}_{2}\right)+P\left(\overline{d_{a}}d_{b}d_{1}d_{2}\right)+P\left(\overline{d_{a}}d_{b}\overline{d}_{1}\overline{d}_{2}\right)+\nonumber \\
P\left(\overline{d_{a}}\overline{d_{b}}d_{1}d_{2}\right)+P\left(\overline{d_{a}}\overline{d_{b}}d_{1}\overline{d}_{2}\right)+P\left(\overline{d_{a}}\overline{d_{b}}\overline{d}_{1}d_{2}\right)+P\left(\overline{d_{a}}\overline{d_{b}}\overline{d}_{1}\overline{d}_{2}\right) & = & 1\nonumber 
\end{eqnarray}
The general solution to the underdetermined (and not independent)
system of equations (\ref{eq:system1})--(\ref{eq:system11}) is 
\begin{equation}
\begin{array}[t]{cc}
P\left(d_{a}d_{b}d_{1}d_{2}\right)=\alpha, & P\left(d_{a}d_{b}d_{1}\overline{d}_{2}\right)=\theta+\frac{1}{2}\left(\delta-\gamma+\beta-\alpha\right),\\
P\left(d_{a}d_{b}\overline{d}_{1}d_{2}\right)=-\frac{1}{2}-\theta, & P\left(d_{a}d_{b}\overline{d}_{1}\overline{d}_{2}\right)=\frac{1}{2}+\frac{1}{2}(-\delta+\gamma-\beta-\alpha),\\
P\left(d_{a}\overline{d_{b}}d_{1}d_{2}\right)=\frac{1}{2}(-\delta-\gamma+\beta-\alpha), & P\left(d_{a}\overline{d_{b}}d_{1}\overline{d}_{2}\right)=\frac{1}{2}-\theta+\frac{1}{2}(-\delta+\gamma-\beta+\alpha)\\
P\left(d_{a}\overline{d_{b}}\overline{d}_{1}d_{2}\right)=\theta, & P\left(d_{a}\overline{d_{b}}\overline{d}_{1}\overline{d}_{2}\right)=\delta,\\
P\left(\overline{d_{a}}bd_{b}d_{1}d_{2}\right)=\frac{1}{2}(\delta-\gamma-\beta-\alpha), & P\left(\overline{d_{a}}d_{b}d_{1}\overline{d}_{2}\right)=\frac{1}{2}-\theta+\frac{1}{2}(-\delta+\gamma-\beta+\alpha),\\
P\left(\overline{d_{a}}d_{b}\overline{d}_{1}d_{2}\right)=\theta, & P\left(\overline{d_{a}}d_{b}\overline{d}_{1}\overline{d}_{2}\right)=\beta,\\
P\left(\overline{d_{a}}\overline{d_{b}}d_{1}d_{2}\right)=\gamma, & P\left(\overline{d_{a}}\overline{d_{b}}d_{1}\overline{d}_{2}\right)=\theta+\frac{1}{2}(\delta-\gamma+\beta-\alpha),\\
P\left(\overline{d_{a}}\overline{d_{b}}\overline{d}_{1}d_{2}\right)=\frac{1}{2}-\theta, & P\left(\overline{d_{a}}\overline{d_{b}}\overline{d}_{1}\overline{d}_{2}\right)=-\frac{1}{2}+\frac{1}{2}(-\delta-\gamma-\beta+\alpha),
\end{array}\label{eq:general-solution}
\end{equation}
where $\alpha$, $\beta$, $\gamma$, $\delta$, and $\theta$ are
arbitrary constants. It is clear from (\ref{eq:general-solution})
that no nonnegative solution exists for (\ref{eq:system1})--(\ref{eq:system11}).
Furthermore, because the system is underdetermined, there are an infinite
number of solutions that satisfy (\ref{eq:system1})--(\ref{eq:system11}).
To find the negative probabilities, though, we need to minimize the
L1 norm, $M^{*}$. Doing so for (\ref{eq:general-solution}) is straightforward
but tedious, and we can show that such minimum happens when $0\leq\alpha\leq\frac{1}{2}$,
$\beta=0$, $\delta=0$, $\theta=0$, and $\alpha=-\gamma$. This
gives us the general solution minimizing $M^{*}$ as 
\begin{equation}
\begin{array}[t]{cc}
P\left(d_{a}d_{b}d_{1}d_{2}\right)=\alpha, & P\left(d_{a}d_{b}d_{1}\overline{d}_{2}\right)=0,\\
P\left(d_{a}d_{b}\overline{d}_{1}d_{2}\right)=-\frac{1}{2}, & P\left(d_{a}d_{b}\overline{d}_{1}\overline{d}_{2}\right)=\frac{1}{2}-\alpha,\\
P\left(d_{a}\overline{d_{b}}d_{1}d_{2}\right)=0, & P\left(d_{a}\overline{d_{b}}d_{1}\overline{d}_{2}\right)=\frac{1}{2},\\
P\left(d_{a}\overline{d_{b}}\overline{d}_{1}d_{2}\right)=0, & P\left(d_{a}\overline{d_{b}}\overline{d}_{1}\overline{d}_{2}\right)=0,\\
P\left(\overline{d_{a}}d_{b}d_{1}d_{2}\right)=0, & P\left(\overline{d_{a}}d_{b}d_{1}\overline{d}_{2}\right)=\frac{1}{2},\\
P\left(\overline{d_{a}}d_{b}\overline{d}_{1}d_{2}\right)=0, & P\left(\overline{d_{a}}d_{b}\overline{d}_{1}\overline{d}_{2}\right)=0,\\
P\left(\overline{d_{a}}\overline{d_{b}}d_{1}d_{2}\right)=-\alpha, & P\left(\overline{d_{a}}\overline{d_{b}}d_{1}\overline{d}_{2}\right)=0,\\
P\left(\overline{d_{a}}\overline{d_{b}}\overline{d}_{1}d_{2}\right)=\frac{1}{2}, & P\left(\overline{d_{a}}\overline{bd_{b}}\overline{d}_{1}\overline{d}_{2}\right)=-\frac{1}{2}+\alpha,
\end{array}\label{eq:general-solution-minimizing}
\end{equation}
$0\leq\alpha\leq\frac{1}{2}$, clearly showing that $M^{*}=3$. We
should notice that this value of $M^{*}$ is greater than the $M^{*}=2$
for the Bell-EPR case \citep{oas_exploring_2014}, perhaps already
suggesting that the double-slit is more contextual (see \citet{de_barros_unifying_2014}
for a discussion of $M^{*}$ as a measure of contextuality). This
stronger contextuality probably comes from the use of triple moments
in the Mach-Zehnder as opposed to only pairwise two-moments in the
case of the standard Bell-EPR setup. 

Now that we have a negative probability distribution, we can use it
to compute conditional probabilities based on the previous counterfactual
assumptions. For instance, a standard question is this: if a photon
is detected on $D_{1}$, what is the probability that this photon
went through $A$ and $B$? Using 
\begin{eqnarray*}
P(d_{a}|d_{1}) & = & \frac{1}{N}\left[P\left(d_{a}d_{b}d_{1}d_{2}\right)+P\left(d_{a}\overline{d_{b}}d_{1}d_{2}\right)+P\left(d_{a}d_{b}d_{1}\overline{d}_{2}\right)+P\left(d_{a}\overline{d_{a}}d_{1}\overline{d}_{2}\right)\right],
\end{eqnarray*}
where 
\begin{eqnarray*}
N & = & P\left(d_{a}d_{b}d_{1}d_{2}\right)+P\left(\overline{d_{a}}d_{b}d_{1}d_{2}\right)+P\left(d_{a}\overline{d_{b}}d_{1}d_{2}\right)+P\left(d_{a}d_{b}d_{1}\overline{d}_{2}\right)\\
 &  & +P\left(\overline{d_{a}}\overline{d_{b}}d_{1}d_{2}\right)+P\left(\overline{d_{a}}d_{b}d_{1}\overline{d}_{2}\right)+P\left(d_{a}\overline{d_{b}}d_{1}\overline{d}_{2}\right)+P\left(\overline{d_{a}}\overline{d_{b}}d_{1}\overline{d}_{2}\right),
\end{eqnarray*}
and (\ref{eq:general-solution-minimizing}) we have that
\begin{eqnarray*}
P(d_{a}|d_{1}) & = & \frac{1}{2}+\alpha,
\end{eqnarray*}
and 
\[
\frac{1}{2}\leq P\left(d_{a}|d_{1}\right)\leq1.
\]
Similarly,
\begin{eqnarray*}
P(d_{b}|d_{1}) & = & \frac{1}{N}\left[P\left(d_{a}d_{b}d_{1}d_{2}\right)+P\left(\overline{d_{a}}d_{b}d_{1}d_{2}\right)+P\left(d_{a}d_{b}d_{1}\overline{d}_{2}\right)+P\left(\overline{d_{a}}d_{b}d_{1}\overline{d}_{2}\right)\right],
\end{eqnarray*}
where 
\begin{eqnarray*}
N & = & P\left(d_{a}d_{b}dd_{2}\right)+P\left(d_{a}\overline{d_{b}}d_{1}d_{2}\right)+P\left(\overline{d_{a}}d_{b}d_{1}d_{2}\right)+P\left(d_{a}d_{b}d_{1}\overline{d}_{2}\right)\\
 &  & +P\left(\overline{d_{a}}\overline{d_{b}}d_{1}d_{2}\right)+P\left(d_{a}\overline{d_{b}}d_{1}\overline{d}_{2}\right)+P\left(\overline{d_{a}}d_{b}d\overline{d}_{2}\right)+P\left(\overline{d_{a}}\overline{d_{b}}d_{1}\overline{d}_{2}\right),
\end{eqnarray*}
and we get
\begin{eqnarray*}
P(d_{b}|d_{1}) & = & \frac{1}{2}+\alpha,
\end{eqnarray*}
the same value we got for the conditional $P(d_{b}|d_{1})$. For $d_{2}$,
the conditional probability is not defined, as the probability for
$d_{2}$ from the joint is zero. However, if we set the interferometer
such that the probability of $d_{2}$ is not zero, but close to it,
then $P(b|d_{2})$ can be shown to approach $P(d_{b}|d_{2})=-\frac{1}{2}+\alpha$.
If that is the case, it is reasonable to assume that $\alpha=1/2$,
such that we do not have negative probabilities for $d_{b}$ (conditioned
on $d_{2}$). If we do so, we reach the interesting conclusion that
both $d_{b}$ and $d_{a}$ have probability $1$ given an observation
on $d_{1}$. In other words, if we use the counterfactual reasoning
from negative probabilities, we reach the conclusion, as Feynman often
said, that the particle goes through \emph{both }paths simultaneously. 

We now end this section with a discussion of some well-known uses
of counterfactual reasoning in quantum mechanics and their relationship
to our discussion above. First, it is worth mentioning that the famous
Leggett and Garg \citeyearpar{leggett_quantum_1985} setup can be
thought of as similar to our double-slit experiment \citep{kofler_condition_2013}.
To see this, we recall that in Leggett and Garg (LG), measurements
in three distinct times can coded by three $\pm1$-valued random variables,
say $\mathbf{X}$, $\mathbf{Y}$, and $\mathbf{Z}$ \citep{bacciagaluppi_leggett-garg_2014,dzhafarov_contextuality_2014,de_barros_unifying_2014,dzhafarov_generalizing_2014}.
In an analogy with the double-slit, and following \citet{kofler_condition_2013},
we can think of $\mathbf{X}$ as a measurement of position before
$BS_{1}$, $\mathbf{Y}$ as a measurement of which path (say, with
$\mathbf{Y}=1$ corresponding to $A$ and $\mathbf{Y}=-1$ to $B$),
and $\mathbf{Z}$ corresponding to a detection in either $D_{1}$
(for $\mathbf{Z}=1$) or $D_{2}$ (\textbf{$\mathbf{Z}=-1$}). This
case would correspond to contextual bias, and would not include counterfactual
reasoning. However, in the original LG paper, counterfactual reasoning
happens by not measuring $\mathbf{Y}$, but instead making inferences
about $\mathbf{Y}$ from an absence of detection in one of the paths.
As we mentioned above, such contextual bias is not surprising, as
the effect of measuring could be thought as interfering with the experimental
conditions themselves. This is similar to what happens in our analysis
above. 

Something analogous happens with the argument given by \citet{scully_feynmans_1994}.
In his paper, he talks about negative probabilities, and shows that
they lead to the interference between two possible modes. However,
it is easy to see that the negative probabilities so obtained are
only existent because of the same type of counterfactual reasoning
shown above. That should be clear by the fact that, in their experiment,
the interference pattern is existent, and therefore we have observational
contextual bias, similar to the Leggett-Garg setup.

\section{Final Remarks}

In this paper we presented a proposed theory of negative probabilities
that could be used to describe non-monotonic reasoning. Such theory
was shown to be equivalent, in the case when a proper probability
distribution exists, to the standard Kolmogorov probability, as the
requirement of minimizing the total probability mass leads to a Kolmogorovian
distribution. Furthermore, in cases where no proper joint probability
exists, the minimization of the total negative mass is simply a requirement
that our quasi-probability distribution is as close to a proper one
as possible. Such minimization is similar to the requirement with
upper probabilities of minimizing the total sum of probabilities,
which may exceed one. 

We did not attempt to interpret negative probabilities, but instead
took the approach that they constitute a bookkeeping tool that meets
a minimum rationality criteria of minimization of the L1 probability
norm. This criteria is perhaps not without practical consequences.
For instance, in \citet{de_barros_decision_2014}, we showed that
in certain cases where the pairwise correlations lead to contradictions,
this minimization results in constraints to the triple moments. In
addition, in \citet{oas_exploring_2014}, it was shown that the minimized
L1 norm is equivalent to the CHSH parameter, $S$, as used in measures
of non-locality \citep{cirelson_quantum_1980}, specifically $M^{*}=S/2$.
Finally, the L1 norm can be thought of as a measure of contextuality
for random variables, and is closely related to other measures of
contextuality, at least for three and four random-variables, and for
more variables it suggests possible different classifications for
contextuality \citep{de_barros_unifying_2014}. 

Though the double-slit experiment is the archetypical in discussions
of how quantum mechanics leads to violation of the laws of probability
or logic \citep{fano_two-slits_2014}, it is not the simplest and
most accessible example that contains the key conceptual elements
relevant to the subject. In fact, the Mach-Zehnder interferometer,
as we showed above, presents the same characteristics as the double-slit
experiment that are relevant to conceptual discussions of quantum
mechanics, without the complications associated with the details of
continuous interference patterns present in the double-slit. For that
reason, in our discussion of the double-slit experiment in terms of
negative probabilities, we resorted to the simpler case of the Mach-Zehnder
interferometer. As we saw, the Mach-Zehnder interferometer allows
us to talk about the features of double-slit experiment in terms of
discrete random variables, which tremendously reduce the mathematical
complexity without loosing any conceptual generality.

As we showed, the two possible setups for the Mach-Zehnder interferometer,
one with which-path information and another with interference, present
contextual biases. This has, as a consequence, the non-existence of
a joint negative (quasi) probability distribution consistent with
all observations of the two Mach-Zehnder interferometer setups. This
clearly corresponds to two different experimental contexts, and not
only does a joint probability consistent with both contexts not exist,
but no negative joint distribution exists either. This type of system,
where contextuality comes from contextual biases, exhibit what one
could think of as stronger contextuality than other systems, such
as EPR. 

This stronger contextuality may be what is reflected in the large
values of $M^{*}$ for the two slits, but such a connection has not
been studied in detail. In fact, it is interesting to notice that
the large values of $M^{*}$ is associated to a set of observables
that do not provide a complete picture of the experimental conditions,
as it relies on counterfactuals. Perhaps there is a connection between
large $M^{*}$ for a restricted set of observables and contextual
biases for an extended set, which could provide an interesting criteria
for contextual biases. Notice that, as mentioned in Section \ref{sec:The-Mach-Zehnder-Interferometer},
contextual biases are equivalent to the violation of the no-signaling
condition in multipartite systems.

Finally, we would like to comment on Feynman's \citeyearpar{feynman_negative_1987}
discussion of the double-slit experiment. In this paper, he argues
that the non-monotonic character of quantum probabilities could be
represented by non-observable negative probabilities. He then goes
on and constructs (in a very informal way) a possible negative probability
that could explain the outcomes of the experiment. However, as we
pointed out above, negative probabilities consistent with the outcomes
of the double-slit experiment are impossible, unless we make use of
certain specific counterfactual reasoning. It is interesting to note
that in his actual experimental realization of Feynman's double-slit
experiment, Scully et al. \citeyearpar{scully_feynmans_1994} construct
a negative probability distribution, and their probabilities rely
exactly on the type of counterfactual reasoning used above. Were they
to try and construct a negative probability from the full range of
experimental data, they would not be able to do so. A similar case
is present in the LG experiment also discussed above.

\paragraph*{Acknowledgments. }

This paper stems from a 2011 seminar led by the three authors, where
a full review and exploration of negative probabilities was undertaken,
and we profited from exchanges with the seminar participants, in particular
Tom Ryckman, Claudio Carvalhaes, Michael Heaney, Michael Friedman,
Hyungrok Kim, and Niklas Damiris. It was Pat Suppes\textquoteright s
goal to establish a rigorous, consistent theory of negative probability,
and this work is in line with that unfinished goal. We also benefited
tremendously from collaborations and discussions with Ehtibar Dzhafarov
and Janne Kujala. We also thank Guido Bacciagaluppi for discussions
and for pointing us to the Kofler and Brukner reference. 

\bibliographystyle{apa}
\bibliography{Quantum}

\end{document}